\newif\if@restonecol
\DeclareMathOperator*{\argmax}{arg\,max}
\newtheorem{theorem}{Theorem}
\newtheorem{lemma}[theorem]{Lemma}
\begin{document}
\title{Symbol Level Precoding for Systems \\ with Improper Gaussian Interference} \author{Lu~Liu,
Rang~Liu, Ly~V.~ Nguyen,~\IEEEmembership{Member,~IEEE,}\\and~A.~Lee~Swindlehurst,~\IEEEmembership{Fellow,~IEEE}
\thanks{L. Liu was with Center for Pervasive Communications and Computing, University of California, Irvine, USA. She is now with Qorvo Inc., Chandler, USA (e-mail: liul22@uci.edu).}
\thanks{Rang Liu, Ly V. Nguyen and A. Swindlehurst are with Center for Pervasive Communications and Computing, University of California, Irvine, USA (e-mail: \{rangl2, vanln1, swindle\}@uci.edu).}
\thanks{This work was supported by the National Science Foundation under grant CCF-2008714.}}
\maketitle
\begin{abstract}
This paper focuses on precoding design in multi-antenna systems with improper Gaussian interference (IGI), characterized by correlated real and imaginary parts. We first study block level precoding (BLP) and symbol level precoding (SLP) assuming the receivers apply a pre-whitening filter to decorrelate and normalize the IGI. We then shift to the scenario where the base station (BS) incorporates the IGI statistics in the SLP design, which allows the receivers to employ a standard detection algorithm without pre-whitenting. Finally we address the case where the channel and statistics of the IGI are unknown, and we formulate robust BLP and SLP designs that minimize the worst case performance in such settings. Interestingly, we show that for BLP, the worst-case IGI is in fact proper, while for SLP the worst case occurs when the interference signal is maximally improper, with fully correlated real and imaginary parts. Numerical results reveal the superior performance of SLP in terms of symbol error rate (SER) and energy efficiency (EE), especially for the case where there is uncertainty in the non-circularity of the jammer.
\end{abstract}
\begin{IEEEkeywords}
Symbol-level precoding, constructive interference, improper Gaussian noise, MMSE, robust design.
\end{IEEEkeywords}
\section{Introduction}
The evolution from 5G to 6G represents a transformative leap in wireless communication technology, promising unprecedented advancements in speed, capacity, and connectivity. Building on the foundation of 5G's innovations, the development of 6G aims to provide services including ubiquitous mobile ultra broadband (uMUB), ultra-high-speed-with-low-latency communications (uHSLLC), and ultra-high data density (uHDD) \cite{zong20196g,saad2019vision}. Multiple-user (MU) multiple-input multiple-output (MIMO) technology, which involves using multiple antennas to simultaneously serve multiple users or devices, is a cornerstone of both 5G and 6G.

The use of MIMO in ultra-dense networks with smaller cell sizes and more antennas will result in a proportional increase in both inter- and intra-cell interference. To manage the interference, precoding or beamforming is needed to steer the transmit signals towards intended users and mitigate multiuser interference (MUI) \cite{lu2014overview,li2020tutorial}. Existing precoding schemes can be classified as either block level precoding (BLP) or symbol level precoding (SLP). Traditional BLP approaches such as maximum ratio transmission (MRT), zero-forcing (ZF), and optimum interference-constrained or power-constrained precoding \cite{lo1999maximum,spencer2004zero,bjornson2014optimal,tervo2015optimal,schubert2005iterative} are typically designed to suppress the MUI. These approaches assume linear precoding designs, and only depend on the current channel state information (CSI). On the other hand, SLP techniques exploit information about both the CSI and transmitted symbols \cite{alodeh2018symbol, ali2018DPCIR, li2020tutorial, salem2021error}, and generally result in nonlinear designs. Although SLP requires increased complexity, it enables the precoder to take advantage of the MUI and convert it into constructive interference (CI) \cite{masouros2009dynamic,masouros2010correlation,masouros13known}, thus increasing the degrees-of-freedom (DoF) available for precoder design and significantly improving performance.

Numerous SLP approaches have been proposed in recent years. The authors of \cite{masouros2009dynamic} applied SLP using channel inversion precoding, and showed that it enhances the effective signal-to-interference-plus-noise ratio (SINR) at the receivers without investing additional power at the base station (BS). To further improve this design, the technique in \cite{masouros2010correlation} strictly aligns the MUI with the users' desired symbols so that it is converted to CI through a symbol-based correlation rotation matrix that depends on the combined data and CSI. Later, in \cite{alodeh2016energy},  phase alignment constraints were relaxed in order to extend the feasible region of the optimal precoding and achieve additional power savings. All of the above work was developed for either phase-shift keying (PSK) or quadrature-amplitude modulation (QAM) constellations. To tackle the complexity of SLP, an optimal ``closed-form'' SLP solution has been introduced in \cite{li2018interference} to make the design more practical for PSK modulation. Recently, \cite{liu2022end} proposed an end-to-end learning-based approach to optimize the modulation orders for SLP communication systems.

A key strength of SLP is its ability to use CI to move the received signals further away from the symbol decision boundaries, which provides robustness against noise and interference. In the literature, CI regions (CIRs) have been defined to describe the degree to which the received symbols will be robust against noise and unmodeled perturbations. Recent SLP approaches have been designed to maximize the distance between the received signals and the decision boundaries, also referred to as the {\em safety margin} \cite{jedda2017massive,ali2018DPCIR,li2020interference}, for a given maximum transmit power. The methods described in \cite{jedda2018quantized, liu2024robust} introduce Maximum Safety Margin (MSM) precoder designs that are able to minimize an upper bound on the symbol error rate (SER). The MSM approach can be contrasted with algorithms based on minimizing the mean squared-error (MMSE) between the desired and received symbols \cite{masouros13known,masouros2015exploiting}. MSM precoders generally guarantee a better quality-of-service (QoS) for the same level of transmit power, or equivalently the same QoS with less power consumption.

In communication systems, random noise and interference are often modeled as zero-mean Gaussian processes whose real and imaginary parts are uncorrelated and of equal variance \cite{kay1993fundamentals}. Such signals are referred to as ``circular'' or ``proper,'' and are generally more convenient to deal with in transceiver designs \cite{adali2011complex}. However, proper Gaussian signals are not always justified in practice, necessitating adjustments to the transceiver design accordingly \cite{schreier2010statistical}. For example, improper signals can arise from hardware impairments (HWI) caused by phase noise, imperfections in power amplifier manufacturing, non-linearities and I/Q imbalances in RF front ends, etc. \cite{canbilen2018spatial,dao2020iq}. Such factors are more severe in 5G-and-beyond communication systems due to higher carrier frequencies, and thus deserve more attention. Other source of non-circular signals arises from improper complex interference, self-interference, and asymmetric noise \cite{kim2016asym, lameiro2016maximally,alsmadi2018ssk}. 

There are relatively few studies about improper Gaussian interference (IGI) in MIMO downlink communication systems. In \cite{sallem2012optimal}, a maximum likelihood sequence estimator is proposed when the data is corrupted by non-circular zero-mean Gaussian noise in single-input multiple-output (SIMO) systems. In \cite{alsmadi2021effect}, the authors investigated the effects of IGI on quadrature spatial modulation where symbols are expanded into in-phase and quadrature dimensions separately transmitting the real and imaginary parts of an amplitude/phase modulated signal. Instead of viewing IGI as a purely destructive effect, some studies have leveraged the non-circularity of improper Gaussian signaling (IGS) to enhance achievable rates and energy efficiency in different interference channel scenarios. Due to additional DoF offered by IGS, such approaches have been shown to achieve better performance than traditional circular Gaussian signaling \cite{zeng2013optimized, de2018reconsidering, soley2020improper}. In \cite{nasir2019improper}, IGS was designed to manage interference in order to maximize the users'
minimum rate under transmit power constraints. The so-called widely linear precoding has been commonly used to tackle signal non-circularity, and is based on the fact that IGS is a linear transformation of transmit signals and noise, and requires a joint transmitter and receiver design \cite{sterle2007widely,darsena2013widely,zhang2017widely,javed2020journey}.  

In this paper, we focus on precoding design at the transmitter for scenarios involving a jammer transmitting IGI. To begin, assuming the non-circular covariance of the jammer is available at both the BS and the users, we first study the application of an individual pre-whitening filter at each user in order to account for the IGI, and develop modified BLP and SLP approaches. We then study a modification to the SLP design that enables the precoding to be implemented solely by the BS without requiring receiver preprocessing. The non-circular IGI requires the definition of both ``upper'' and ``lower'' safety margins (SMs), which are represented by a bounding box containing  a confidence ellipse for the non-circular noisy observations \cite{strang2022introduction,meyer2023matrix}. The confidence ellipse is centered at the noise-free received signal and defines the region within which noisy received signals will lie with a certain probability. Finally, we consider scenarios where the jammer channel and statistics are unknown, and the BS must design a precoder that is robust to this uncertainty. In particular, we take the conservative approach of designing the BLP and SLP approaches to maximize the performance for the worst-case IGI, and demonstrate that BLP and SLP lead to fundamentally different robust solutions. The main contributions of our work are summarized as follows:
\begin{itemize}
    \item We show how to modify the MMSE-based BLP approach for IGI using pre-whitening at each user, assuming that the CSI and the non-circular covariance of the jamming interference is known. We further demonstrate that when 
    this information is unavailable, the MMSE-based BLP approach should assume a circular interference covariance in order to minimize the worst-case mean-squared error (MSE).
    \item We show how to modify SLP-based designs for IGI using both pre-whitening at each user, and also using transmit-only processing where the receivers do not need to perform any pre-processing prior to detection. 
    \item We further investigate the case when the CSI and non-circularity of the jamming is unknown, and show that, unlike MMSE-based BLP, the worst case for SLP always occurs with maximally improper interference, where the real and imaginary parts of the jammer signal are fully correlated. We then show how to modify the SM-based SLP algorithm to satisfy the worst-case design. 
    \item Finally, we provide comprehensive numerical results to compare MMSE-based BLP with SLP in various settings, particularly in the case where the channel and non-circularity of the IGI is unknown. We also illustrate our theoretical conclusions regarding the worst-case jammer statistics via some graphical examples.
\end{itemize}

\textsl{Notation:} Bold lower case and upper case letters indicate vectors and matrices, and non-bold letters express scalars. The $N\times N$ identity matrix is denoted by $\mathbf{I}_{N}$. $\mathbf{A}_{mn}$ denotes the $(m,n)$-th element in the matrix $\mathbf{A}$, and $a_{m}$ denotes the $m$-th element in the vector $\mathbf{a}$. $(\cdot)^*$, $(\cdot)^{-1}$, $(\cdot)^T$ and $(\cdot)^H$ denote the conjugate, inverse, transpose and Hermitian transpose operators, respectively. $\mathbb{C}^{m\times n}$ ($\mathbb{R}^{m\times n}$) represents the space of complex (real) matrices of dimension $m\times n$. $\mathbb{E}\{\cdot\}$, $\mathbb{P}\{\cdot\}$, $\lvert(\cdot)\rvert$ and $\|\cdot\|$ respectively represent the expectation operator, the probability, the absolute value and the Euclidean norm. $\mathcal{N}(\mu, \sigma^2)$ denotes the normal distribution with mean $\mu$ and variance $\sigma^2$, while $\mathcal{CN}$ denotes the complex normal distribution. The functions $\Tr\{\cdot\}$ and $\text{diag}\{\cdot\}$ respectively indicate the trace of a matrix and a vector composed of the diagonal elements of a square matrix, while $\text{diag}\{\mathbf{a}\}$ denotes a square diagonal matrix with the elements of $\mathbf{a}$ on the main diagonal. A block-diagonal matrix with diagonal blocks $\mathbf{A}_1,\cdots,\mathbf{A}_K$ is denoted by $\text{blockdiag}\{ \, \mathbf{A}_1, \cdots, \mathbf{A}_K\}$. The real and imaginary parts of a complex number are represented by $\mathcal{R}\{\cdot\}$ and $\mathcal{I}\{\cdot\}$, respectively.  For matrices and vectors, $\geq$ and $\leq$ denote element-wise inequalities while $\mathbf{A}\succeq 0$ indicates that the matrix $\mathbf{A}$ is semi-definite positive. Finally, $\triangleq$ means `definded as.'

\section{System Model}\label{sec:systemmodel}
\begin{figure}[!t]
\centering
\includegraphics[width=2.8in]{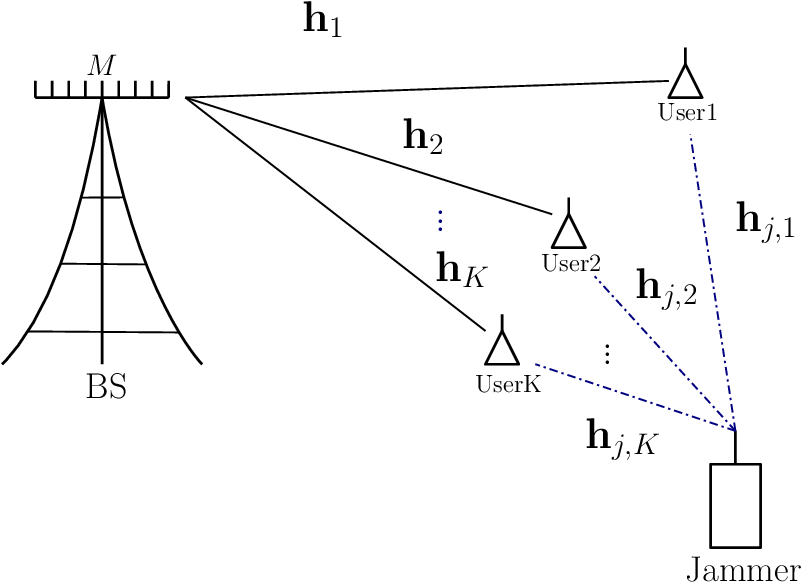}
\caption{System Model}
\label{SystemModel}
\end{figure}
We consider the MIMO downlink system as depicted in Fig.~\ref{SystemModel}, with an $M$-antenna BS, $K$ single-antenna users, and a single-antenna jammer. The signal received at user $k$ can be expressed as
\begin{equation}
y_k=\mathbf{h}_k\mathbf{x}+{h}_{j,k}{z}+n_k\triangleq\mathbf{h}_k\mathbf{x}+c_k\;,
\label{complexYk}
\end{equation}
where $\mathbf{h}_k\in \mathbb{C}^{1\times M}$ and ${h}_{j,k}$ respectively denote the channel from the BS and the jammer to the $k$-th user,  $\mathbf{x}\in \mathbb{C}^{M\times 1}$ represents the precoded signal transmitted by the BS, ${z}$ is a Gaussian interference signal transmitted by the jammer, and $n_k\sim {\cal{CN}}(0, \bar{\sigma}_k^2)$ is circular zero-mean additive Gaussian white noise (AWGN). We combine together the noise and jamming interference as $c_k={h}_{j,k}{z}+n_k$, and refer to this as the {\em effective} noise in the sequel. For a given column vector $\mathbf{a}\in\mathbb{C}^{q\times 1}$ and a given row vector $\mathbf{b}\in\mathbb{C}^{1\times t}$ , we introduce the operators 
\begin{equation*}
\bar{\mathbf{a}}=\begin{bmatrix}\mathcal{R}\{\mathbf{a}\}\\\mathcal{I}\{\mathbf{a}\}\end{bmatrix}\in\mathbb{R}^{2q\times 1}\; , \bar{\mathbf{B}}=\begin{bmatrix}\mathcal{R}\{\mathbf{b}\}&-\mathcal{I}\{\mathbf{b}\}\\\mathcal{I}\{\mathbf{b}\}&\mathcal{R}\{\mathbf{b}\}\end{bmatrix}\in\mathbb{R}^{2\times 2t}\; ,
\end{equation*}
and rewrite Eq.~(\ref{complexYk}) with real-valued quantities
\begin{align*}
\begin{bmatrix}\mathcal{R}\{y_k\}\\\mathcal{I}\{y_k\}\end{bmatrix} &=\begin{bmatrix}\mathcal{R}\{\mathbf{h}_k\}&-\mathcal{I}\{\mathbf{h}_k\}\\\mathcal{I}\{\mathbf{h}_k\}&\mathcal{R}\{\mathbf{h}_k\}\end{bmatrix}\begin{bmatrix}\mathcal{R}\{\mathbf{x}\}\\\mathcal{I}\{\mathbf{x}\}\end{bmatrix}\\
&+\begin{bmatrix}\mathcal{R}\{{h}_{j,k}\}&-{I}\{{h}_{j,k}\}\\\mathcal{I}\{{h}_{j,k}\}&\mathcal{R}\{{h}_{j,k}\}\end{bmatrix}\begin{bmatrix}\mathcal{R}\{{z}\}\\\mathcal{I}\{{z}\}\end{bmatrix}+\begin{bmatrix}\mathcal{R}\{n_k\}\\\mathcal{I}\{n_k\}\end{bmatrix}\;,
\end{align*}
which is equivalently denoted by
\begin{equation}
\bar{\mathbf{y}}_k=\bar{\mathbf{H}}_k\bar{\mathbf{x}}+\bar{\mathbf{H}}_{j,k}\bar{\mathbf{z}}+\bar{\mathbf{n}}_k\label{rxreal}\;.
\end{equation}

We assume that in general the real and imaginary parts of the jammer signal $z$ are correlated; i.e., the jammer signal is noncircular or improper \cite{schreier2010statistical}, and we describe it as follows: 
\begin{equation}
\bar{\mathbf{z}}=\rho\mathbf{T}\begin{bmatrix}v_R\\v_I\end{bmatrix}=\rho\mathbf{T}\bar{\mathbf{v}} \; ,
\label{jammerZ}
\end{equation}
where $\rho>0$ and $\rho^2$ is the jammer transmit power, $\bar{\mathbf{v}}\sim {\cal{N}}(\mathbf{0}, \mathbf{I}_{2})$, and the matrix $\mathbf{T}\in \mathbb{R}^{2\times 2}$ is normalized such that $\mathbf{Q}=\mathbf{T}\mathbf{T}^T$ satisfies $\Tr\{\mathbf{Q}\}=1$. The real-value additive noise is distributed as $\bar{\mathbf{n}}_k\sim{\cal{N}}(\mathbf{0},\frac{1}{2}\bar{\sigma}_k^2\mathbf{I}_2)$. The covariance matrix $\mathbf{Q} =\mathbf{Q}^T \succeq 0$ defines the degree to which the jammer signal $\bar{\mathbf{z}}$ is non-circular. If $\mathbf{Q}=\frac{1}{2} \mathbf{I}_2$, the jammer signal is circular, while unequal diagonal terms or non-zero off-diagonal entries indicate an improper signal. 
With $\bar{\mathbf{c}}_k=\bar{\mathbf{H}}_{j,k}\bar{\mathbf{z}}+\bar{\mathbf{n}}_k$, we can write the covariance of $\bar{\mathbf{c}}_k$ as 
\begin{align}
\mathbb{E}\{\bar{\mathbf{c}}_k\bar{\mathbf{c}}_k^T\}&=\bar{\mathbf{H}}_{j,k}\mathbb{E}\{\bar{\mathbf{z}}\bar{\mathbf{z}}^T\}\bar{\mathbf{H}}_{j,k}^T+\mathbb{E}\{\bar{\mathbf{n}}_k\bar{\mathbf{n}}_k^T\}\\
&=\rho^2\bar{\mathbf{H}}_{j,k}\mathbf{Q}\bar{\mathbf{H}}_{j,k}^T+\frac{1}{2}\bar{\sigma}_k^2\mathbf{I}_2\label{eq:jnGk}\\
&\triangleq \mathbf{G}_k\label{Gk}\; ,
\end{align}
where $\bar{\sigma}_k^2$ is defined as the variance of $n_k$.
Due to the non-circularity of $\bar{z}$, $\mathbf{G}_k$ will in general not be diagonal, and hence conventional precoding techniques that assume circular noise should be reconsidered.

In this downlink scenario, the BS desires to send a symbol $s_k$ to user $k$ for $k=1,\cdots,K$. In BLP with circular noise, the transmitted signal would be expressed as a linear function of the complex symbols $\mathbf{x}=\mathbf{P}_c\mathbf{s}$, where $\mathbf{s}=\left[ s_1 \; \cdots \; s_K \right]^T$ is the symbol vector and $\mathbf{P}_c$ is the $M\times K$ precoder. In the non-circular case, we separate the real and imaginary parts and write the linear BLP transmit signal as $\bar{\mathbf{x}}=\mathbf{P}\bar{\mathbf{s}}$. In general,
\begin{equation*}
\mathbf{P}\neq\begin{bmatrix}\mathcal{R}\{\mathbf{P}_c\}&-\mathcal{I}\{\mathbf{P}_c\}\\\mathcal{I}\{\mathbf{P}_c\}&\mathcal{R}\{\mathbf{P}_c\}\end{bmatrix}\; ,
\end{equation*}
hence we need to modify the precoder for non-circular noise. For the case of SLP, the transmit signal $\mathbf{x}$, or more generally $\bar{\mathbf{x}}$, is a non-linear function of $\mathbf{s}$ or $\bar{\mathbf{s}}$, respectively. For simplicity, we will assume that the elements of $\mathbf{s}$ are PSK symbols, although the methods can be generalized to other (e.g., QAM) constellations, using approaches similar to those in \cite{jedda2018quantized}.

\section{Pre-Whitening Methods}
If each user $k$ has knowledge of its own interference-plus-noise covariance $\mathbf{G}_k$, then a straightforward way of dealing with improper interference is through a pre-whitening step that decorrelates and normalizes the real and imaginary parts of the interference. This prewhitening also impacts the BS-user channels, so the BS will also need to know $\mathbf{G}_k$ for each user, presumably through a feedback channel from the users. Note that each $\mathbf{G}_k$ is only a $2\times 2$ matrix, so the amount of feedback required is minimal. In this section we describe how to implement BLP and SLP with non-circular interference pre-whitening at each user. While straightforward, the results derived here will be useful in later sections of the paper.

\subsection{Block Level Precoding}\label{sec:smartBLP}
There is limited research on precoding methods that address improper noise. In this section, we use the MMSE criterion to develop a benchmark BLP approach. According to Eqs.~(\ref{rxreal})-(\ref{Gk}), if $\mathbf{G}_k$ is known at the receivers, the non-circular disturbance $\bar{\mathbf{c}}_k$ can be pre-whitened as follows:
\begin{equation}
\mathbf{G}_k^{-\frac{1}{2}}\bar{\mathbf{y}}_k=\mathbf{G}_k^{-\frac{1}{2}}\bar{\mathbf{H}}_k\bar{\mathbf{x}}+\mathbf{G}_k^{-\frac{1}{2}}\bar{\mathbf{c}}_k\label{BLPyk}\; .
\end{equation}
Using the following definitions for all users:
\begin{eqnarray}
    \begin{bmatrix}
      y_{E,k}^1\\ y_{E,k}^2 
    \end{bmatrix}=\mathbf{G}_k^{-\frac{1}{2}}\bar{\mathbf{y}}_k\; ,
    \begin{bmatrix}
      \mathbf{h}_{E,k}^1\\ \mathbf{h}_{E,k}^2 
    \end{bmatrix}=\mathbf{G}_k^{-\frac{1}{2}}\bar{\mathbf{H}}_k\; ,
    \begin{bmatrix}
      c_{E,k}^1\\ c_{E,k}^2 
    \end{bmatrix}=\mathbf{G}_k^{-\frac{1}{2}}\bar{\mathbf{c}}_k\; ,\nonumber
\end{eqnarray}
the received symbols can be expressed as
\begin{equation}
\mathbf{y}_E=\mathbf{H}_E\mathbf{P}\bar{\mathbf{s}}+\mathbf{c}_E\; ,\label{yEbar}
\end{equation}
where 
\begin{eqnarray*}   \mathbf{y}_E=\begin{bmatrix}y_{E,1}^1&\cdots&y_{E,K}^1&y_{E,1}^2&\cdots&y_{E,K}^2\end{bmatrix}^T\; ,\\
\mathbf{H}_E=\begin{bmatrix}{\mathbf{h}_{E,1}^1}^T&\cdots&{\mathbf{h}_{E,K}^1}^T&{\mathbf{h}_{E,1}^2}^T&\cdots&{\mathbf{h}_{E,K}^2}^T\end{bmatrix}^T\; ,\\
\mathbf{c}_E=\begin{bmatrix}c_{E,1}^1&\cdots&c_{E,K}^1&c_{E,1}^2&\cdots&c_{E,K}^2\end{bmatrix}^T\; .
\end{eqnarray*}

For a given $\mathbf{H}_E$, we consider the problem of minimizing the mean 
squared error (MMSE) as the criterion to design the BLP \cite{sterle2007widely, darsena2013widely, zhang2017widely}, which is formulated as
\begin{align}
     \min_{\mathbf{P},\beta}\quad& \mathbb{E}\{\|\beta^{-1}\mathbf{y}_E-\bar{\mathbf{s}}\|^2\}\label{opt:smartBLP}\\
     \text{subject to}\quad&\mathbb{E}\{\|\mathbf{P}\bar{\mathbf{s}}\|^2\}\leq P_t\; ,
\end{align}
where $\beta$ is a scaling factor, and $P_t$ is the transmit power budget. The corresponding Lagrangian function is 
\begin{align*}
    \mathcal{L}(\mathbf{P}, \beta,\lambda)&=\mathbb{E}\{\|\beta^{-1}\mathbf{y}_E-\bar{\mathbf{s}}\|^2\}+\lambda(\mathbb{E}\{\|\mathbf{P}\bar{\mathbf{s}}\|^2\}-P_t)\\
    &=K-\frac{1}{2}\beta^{-1}\Tr\{\mathbf{H}_E\mathbf{P}+\mathbf{P}^T\mathbf{H}_E^T\}\\
    &\quad+\frac{1}{2}\beta^{-2}\Tr\{\mathbf{H}_E\mathbf{P}\mathbf{P}^T\mathbf{H}_E^T\}+2K\beta^{-2}\\
    &\quad+\frac{1}{2}\lambda\Tr\{\mathbf{P}\mathbf{P}^T\}-\lambda P_t\; ,
\end{align*}
where $\lambda\geq 0$ denotes the Lagrange multiplier, and $\mathbb{E}\{\bar{\mathbf{s}}\bar{\mathbf{s}}^T\}=\frac{1}{2}\mathbf{I}_{2K}$, $\mathbb{E}\{\mathbf{c}_E\mathbf{c}_E^T\}=\mathbf{I}_{2K}$. 

The solution to (\ref{opt:smartBLP}) should satisfy the matrix equations
\begin{align}
    \frac{d \mathcal{L}(\mathbf{P}, \beta,\lambda)}{d \mathbf{P}}&=-\beta^{-1}\mathbf{H}_E^T+\beta^{-2}\mathbf{H}_E^T\mathbf{H}_E\mathbf{P}+\lambda\mathbf{P}=\mathbf{0}_{2M\times 2K }\; ,\label{LagrangianP}\\
    \frac{d \mathcal{L}(\mathbf{P}, \beta,\lambda)}{d \beta}&=\beta^{-2}\Tr\{\mathbf{H}_E\mathbf{P}\}-\beta^{-3}\Tr\{\mathbf{H}_E\mathbf{P}\mathbf{P}^T\mathbf{H}_E^T\}\nonumber\\
    &\quad\quad-4K\beta^{-3}=0\label{LagrangianBeta}\; ,
\end{align}
which ultimately yields
\begin{eqnarray}
    \mathbf{P}&=&\beta\boldsymbol{\Delta}\mathbf{H}_E^T\label{BLPprecoding}\; ,\\
    \beta&=&\sqrt{\frac{2P_t}{\Tr\{\boldsymbol{\Delta}\mathbf{H}_E^T\mathbf{H}_E\boldsymbol{\Delta}\}}} \; , \label{BLPbeta}
\end{eqnarray}
where $\boldsymbol{\Delta}=(\mathbf{H}_E^T\mathbf{H}_E+a\mathbf{I}_{2M})^{-1}$ and $a=\frac{2K}{P_t}$. Since this BLP method takes into account the non-circularity of the interference, we refer to it as ``pre-whitened BLP'' or PW-BLP.

\subsection{Symbol Level Precoding}\label{sec:MSM}
As discussed in \cite{jedda2018quantized}, the MSM SLP approach is designed such that the noise-free received signal for user $k$ will be located within the CIR of the transmitted symbol with a safety margin of $\delta_k$. For the case of circular $c_k$, the choice of $\delta_k$ is made based on a certain desired level of reliability given the variance of the total noise plus interference seen at the receiver, which we denote as 
\begin{equation}
    \sigma_k^2 = \Tr\{\Gbf_k\} = \rho^2 |h_{j,k}|^2 +\bar{\sigma}_k^2 \; .
\end{equation}
The value of $\delta_k$ can, for example, be chosen to minimize an upper bound on the symbol error probability (SEP) \cite{jedda2018quantized}. To maintain consistency with the definition of $\delta_k$ in the case of non-circular interference, we pre-whiten user $k$'s signal as follows:
\begin{equation}
\gamma_k\mathbf{G}_k^{-\frac{1}{2}}\bar{\mathbf{y}}_k=\gamma_k\mathbf{G}_k^{-\frac{1}{2}}\bar{\mathbf{H}}_k\bar{\mathbf{x}}+\gamma_k\mathbf{G}_k^{-\frac{1}{2}}(\bar{\mathbf{H}}_{j,k}\bar{\mathbf{z}}+\bar{\mathbf{n}}_k) \; , \label{eq:rxrealwhite}
\end{equation}
where $\gamma_k$ is a scaling factor chosen to ensure that the total power of the jammer plus noise remains the same as in the circular noise case, i.e., 
\begin{equation}
\mathbb{E}\left\{\Tr\left\{\gamma_k^2\mathbf{G}_k^{-\frac{1}{2}}\bar{\mathbf{c}}_k\bar{\mathbf{c}}_k^T(\mathbf{G}_k^{-\frac{1}{2}})^T\right\}\right\}=\sigma_k^2\; .
\end{equation}
Thus, the scaling should be chosen as $\gamma_k=\frac{\sigma_k}{\sqrt{2}}$.

If we define 
\begin{equation}
\gamma_k\mathbf{G}_k^{-\frac{1}{2}}\bar{\mathbf{H}}_k\triangleq\begin{bmatrix}\mathbf{h}_{e1,k}\\\mathbf{h}_{e2,k}\end{bmatrix}\; ,
\end{equation}
and the effective {\em complex} channel after whitening is denoted as $\mathbf{h}_{e,k}$,
then the received noiseless symbol in Eq.~(\ref{eq:rxrealwhite}) can be rewritten in real-valued form as
\begin{equation}
\gamma_k\mathbf{G}_k^{-\frac{1}{2}}\bar{\mathbf{H}}_k\bar{\mathbf{x}}=\begin{bmatrix}\mathbf{h}_{e1,k}\bar{\mathbf{x}}\\\mathbf{h}_{e2,k}\bar{\mathbf{x}}\end{bmatrix}=\begin{bmatrix}\mathcal{R}\{\mathbf{h}_{e,k}\mathbf{x}\}\\\mathcal{I}\{\mathbf{h}_{e,k}\mathbf{x}\}\end{bmatrix}\; .
\end{equation}

As an example, for unit-magnitude $D$-PSK signals $s_{k}\in\{s|s=\exp(j\pi(2d-1)/D),\ d\in\{1,\cdots, D\}\}$, the safety margin can be calculated as \cite{jedda2018quantized} 
\begin{equation}
    \delta_k=\mathcal{R}\{s_k^*\mathbf{h}_{e,k}\mathbf{x}\}\sin\theta-\lvert\mathcal{I}\{s_k^*\mathbf{h}_{e,k}\mathbf{x}\}\rvert\cos\theta\label{safetymargin}\; ,
\end{equation}
where $\theta = \pi/D$.
Furthermore, to express the SM more clearly, we derive
\begin{align*}
\mathcal{R}\{s_k^*\mathbf{h}_{e,k}\mathbf{x}\}&=\mathcal{R}\{s_k^*\}\mathcal{R}\{\mathbf{h}_{e,k}\mathbf{x}\}-\mathcal{I}\{s_k^*\}\mathcal{I}\{\mathbf{h}_{e,k}\mathbf{x}\}\\
&=\mathcal{R}\{s_k^*\}\mathbf{h}_{e1,k}\bar{\mathbf{x}}-\mathcal{I}\{s_k^*\}\mathbf{h}_{e2,k}\bar{\mathbf{x}}\\
&=\tilde{\mathbf{h}}_{e,k}^-\bar{\mathbf{x}}\; ,
\end{align*}
\begin{align*}
\mathcal{I}\{s_k^*\mathbf{h}_{e,k}\mathbf{x}\}&=\mathcal{I}\{s_k^*\}\mathcal{R}\{\mathbf{h}_{e,k}\mathbf{x}\}+\mathcal{R}\{s_k^*\}\mathcal{I}\{\mathbf{h}_{e,k}\mathbf{x}\}\\
&=\mathcal{I}\{s_k^*\}\mathbf{h}_{e1,k}\bar{\mathbf{x}}+\mathcal{R}\{s_k^*\}\mathbf{h}_{e2,k}\bar{\mathbf{x}}\\
&=\tilde{\mathbf{h}}_{e,k}^+\bar{\mathbf{x}}\; ,
\end{align*}
where  $\tilde{\mathbf{h}}_{e,k}^-\bar{\mathbf{x}}=\mathcal{R}\{s_k^*\}\mathbf{h}_{e1,k}-\mathcal{I}\{s_k^*\}\mathbf{h}_{e2,k}$, and $\tilde{\mathbf{h}}_{e,k}^+\bar{\mathbf{x}}=\mathcal{I}\{s_k^*\}\mathbf{h}_{e1,k}+\mathcal{R}\{s_k^*\}\mathbf{h}_{e2,k}$.
Then, one possible SLP optimization problem is to maximize the smallest safety margin over all $K$ users for a given transmit power budget, as follows:
\begin{align*}
\max_{\bar{\mathbf{x}}}\quad& \delta\\
\text{subject to}\quad&
(\tilde{\mathbf{h}}_{e,k}^-\sin \theta-\tilde{\mathbf{h}}_{e,k}^+\cos \theta)\bar{\mathbf{x}}\geq\delta,\ \forall k\; ,\\
&(\tilde{\mathbf{h}}_{e,k}^-\sin \theta+\tilde{\mathbf{h}}_{e,k}^+\cos \theta)\bar{\mathbf{x}}\geq \delta,\ \forall k\; ,\\
&\|\bar{\mathbf{x}}\|^2\leq P_t\; .
\end{align*}
This approach is essentially equivalent to the conventional MSM method in \cite{jedda2018quantized}, with the only difference being the pre-whitening transformation of the channel to account for the non-circularity of the effective noise. An alternative formulation is to minimize the transmit power subject to given safety margin constraints $\delta_1^0,\cdots , \delta_K^0$ for each user:
\begin{align*}
\min_{\bar{\mathbf{x}}}\quad& \|\bar{\mathbf{x}}\|^2\\
\text{subject to}\quad&
(\tilde{\mathbf{h}}_{e,k}^-\sin \theta-\tilde{\mathbf{h}}_{e,k}^+\cos \theta)\bar{\mathbf{x}}\geq\delta^0_k,\ \forall k\; ,\\
&(\tilde{\mathbf{h}}_{e,k}^-\sin \theta+\tilde{\mathbf{h}}_{e,k}^+\cos \theta)\bar{\mathbf{x}}\geq \delta^0_k,\ \forall k
\; .
\end{align*}
We will refer to the above algorithms as pre-whitened SLP, or simply PW-SLP, and our numerical results will illustrate the performance gain achieved by adjusting the conventional SLP approach in scenarios with IGI.

\section{SLP with Transmit-Only Processing}\label{sec:smartSLP}
The SLP design of the previous section requires not only the precoding but also receiver processing for pre-whitening the received signal, which motivates the question of whether we can implement an IGI-aware SLP design solely at the transmitter by directly optimizing the SM and CIR. In this section we show that if the covariance matrix $\mathbf{G}_k$ of the jammer signal at each user is known, the definition of SM in \cite{jedda2018quantized,liu2024robust} can be adjusted to appropriately account for the IGI. Intuitively, IGI will produce an elliptical rather than a circular cloud around the noiseless received signal in the complex plane, and for modulations such as PSK and QAM, this will require the definition of two safety margins in order to guarantee the user QoS, defined as the likelihood that the received signal lies in the CIR. We will employ confidence ellipses \cite{strang2022introduction, meyer2023matrix} to formulate the SM and CIR. 

\subsection{Confidence Ellipse}
\begin{figure}
\centering
\begin{subfigure}[b]{0.27\textwidth}
    \centering
\includegraphics[width=\textwidth]{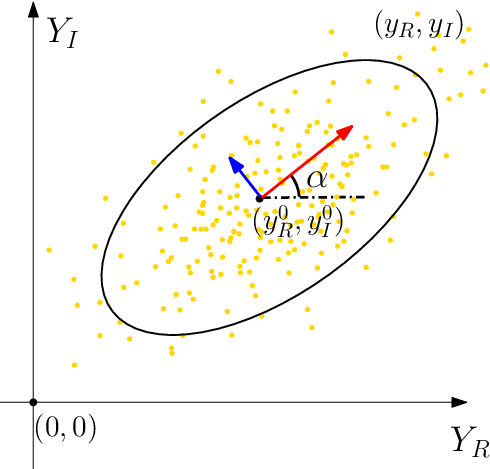}
\caption{}
\label{fig:ellipseOrigin}
\end{subfigure}
\hfill
\begin{subfigure}[b]{0.3\textwidth}
   \centering  
\includegraphics[width=\textwidth]{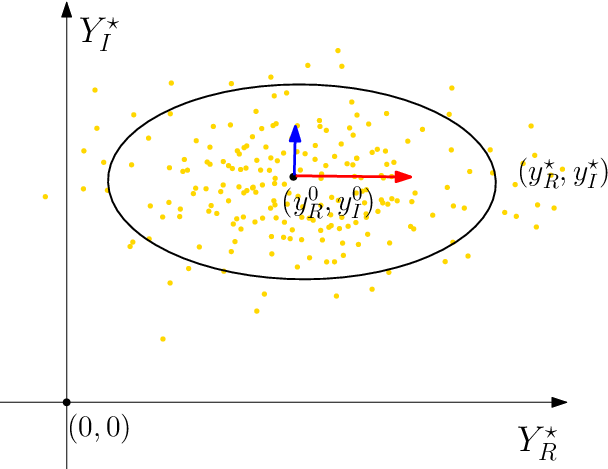}
\caption{}
\label{fig:ellipseRotate}
\end{subfigure}
\caption{(a) Confidence ellipse with correlated 2D data samples;\ (b) Confidence ellipse after rotation to decorrelate the 2D data samples.}
\end{figure}

Given normally distributed two-dimensional (2D) data $\{Y_R,Y_I\}$, where the two variables are correlated with covariance matrix $\mathbf{G}$, we can draw a confidence ellipse to define a region that contains data samples $(y_R,y_I)$ with a preset confidence value. For example, in Fig.~\ref{fig:ellipseOrigin}, the center of the ellipse is $(y^0_R,y^0_I)$, where $y^0_R=\mathbb{E}\{Y_R\}$ and $\ y^0_I=\mathbb{E}\{Y_I\}$ are the mean values of the two variables. The orientation of the ellipse is denoted by the angle $\alpha\ (0\leq\alpha<180^{\circ})$ between the major axis of the ellipse and the $Y_R$-axis. When $\mathbf{G}$ is diagonal, we have $\alpha=0$. The eigenvectors of $\mathbf{G}$ correspond to the directions of the major and minor axes of the ellipse as depicted by the red and blue arrows in Fig.~\ref{fig:ellipseOrigin}, while the square root of the eigenvalues $\lambda_{1},\ \lambda_{2}$ of $\mathbf{G}$ correspond to the spread of data in the two directions. It is easy to show that
\begin{equation}
\alpha=\arctan\frac{v_2}{v_1}
\label{AlphaEllipse}
\end{equation}
where $\mathbf{v}=[ v_1 \; v_2]^T$ is the principle eigenvector of  $\mathbf{G}$ corresponding to the largest eigenvalue \cite{strang2022introduction,meyer2023matrix}.

\begin{figure}[!t]
\centering
\includegraphics[width=3in]{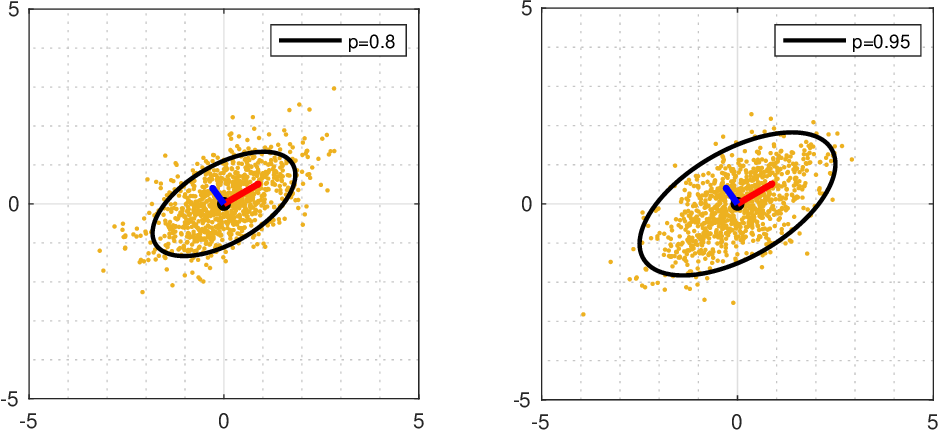}
\caption{The preset $p$ decides the size of confidence ellipses.}
\label{fig:pElipse}
\end{figure}

The size of the ellipse depends on the chosen confidence level $0 \le p \le 1$, which represents the asymptotic fraction of the data samples located inside the ellipse, as illustrated in Fig.~(\ref{fig:pElipse}). To mathematically express the relationship governing $p$, we first decorrelate $Y_R$ and $Y_I$ by rotating the data samples with respect to the ellipse center by the angle $-\alpha$:
\begin{equation}
\mathbf{R}=\begin{bmatrix}\cos\alpha&\sin\alpha\\-\sin\alpha&\cos\alpha\end{bmatrix}\label{rotationMatrix}\; .
\end{equation}
The rotated data samples in Fig.~\ref{fig:ellipseRotate} are defined by
\begin{equation}
\begin{bmatrix}y_R^{\star}\\y_I^{\star}\end{bmatrix}=\mathbf{R}\begin{bmatrix}y_{R}\\y_{I}\end{bmatrix}\; ,
\end{equation}
and the covariance matrix for $\{Y_R^{\star},Y_I^{\star}\}$,  
\begin{equation}
\mathbf{G}^{\star}=\mathbf{R}\mathbf{G}\mathbf{R}^H\label{newCovariance}\; ,
\end{equation}
is diagonal, with eigenvalues equal to $\lambda_1,\lambda_2$ due to the orthogonality of $\mathbf{R}$. Since $Y_R^{\star}$ and $Y_I^{\star}$ are independent normally distributed random variables, according to the $\chi^2$ distribution we can express the confidence ellipse as 
\begin{equation}
\mathbb{P}\left\{\frac{(y_R^{\star}-y^0_R)^2}{\lambda_1}+\frac{(y_I^{\star}-y^0_I)^2}{\lambda_2}\leq\omega\right\}=p\; ,
\end{equation}
where $\omega=-2\ln(1-p)$. 

\subsection{Constructive Interference Region for IGI}
\begin{figure}[!t]
\centering
  \includegraphics[width=3in]{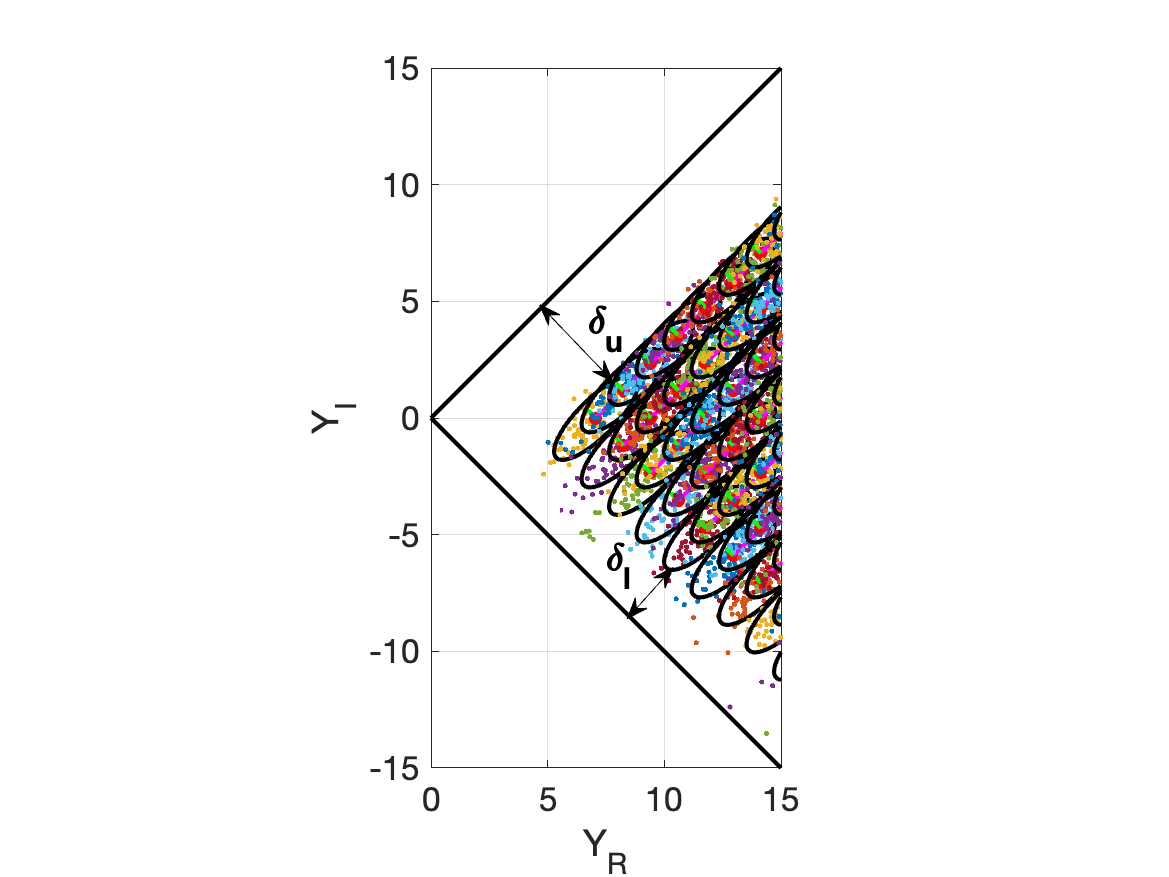}
  \caption{CIR with IGI added to noise-free received signals.}
  \label{DecisionBoundary}
\end{figure}
To describe the CIR in a unified way, we take the case of PSK as an example, and we rotate the original coordinate system for the constellation by the negative phase of the desired symbol, i.e., $\measuredangle s_k^*$, to obtain the modified coordinate system in Fig.~\ref{DecisionBoundary}. We can see that with IGI, the safety margin must be determined separately for the two symbol decision boundaries; in particular, we define the minimum distance from the confidence ellipses to the two decision boundaries as the upper safety margin ($\delta_u^0$) and lower safety margin ($\delta_l^0$), where $\delta_u^0\neq \delta_l^0$. In Fig.~\ref{EllipseLine}, we plot one of the ellipses, with the center at the $k$th user's noiseless received signal $(\bar{\mathbf{H}}_{k}^1\bar{\mathbf{x}},\bar{\mathbf{H}}_{k}^2\bar{\mathbf{x}})$ in the modified coordinate system, where
\begin{align}
\bar{\mathbf{H}}_{k}^1&=\begin{bmatrix}\mathcal{R}\{{s_k^*\mathbf{h}}_k\}&-\mathcal{I}\{{s_k^*\mathbf{h}}_k\}\end{bmatrix}\; ,\\
\bar{\mathbf{H}}_{k}^2&=\begin{bmatrix}\mathcal{I}\{{s_k^*\mathbf{h}}_k\}&\mathcal{R}\{{s_k^*\mathbf{h}}_k\}
\end{bmatrix}\; .
\end{align}
The closest point on the ellipse to each decision boundary, $y_{\mathcal{I}}=\pm (\tan\theta)y_{\mathcal{R}}$, is where the tangent line of the ellipse is parallel to the corresponding boundary, as illustrated in Fig.~\ref{EllipseLine}, where $\theta=\frac{\pi}{D}$ for $D$-PSK. In the discussion below, we show how to obtain expressions for the SMs $\delta_{u,k}^0$ and $\delta_{l,k}^0$ in the presence of IGI, so that the SLP approach can be implemented.

\begin{figure}[!t]
\centering
  \includegraphics[width=2.5in]{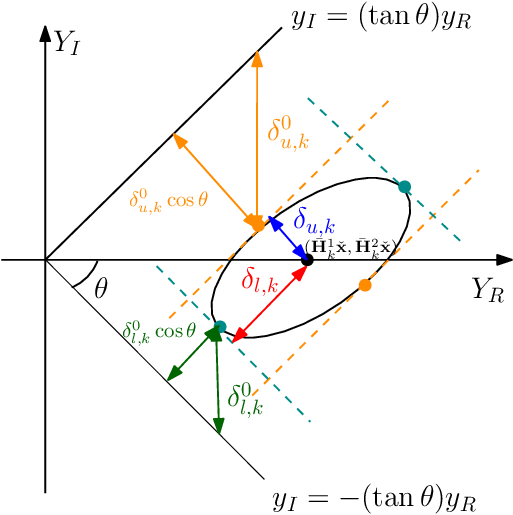}
  \caption{Distance from confidence ellipse to decision boundaries.}
  \label{EllipseLine}
\end{figure}

\subsection{SLP for Non-Circular Interference}
We will develop the SLP precoding problem to minimize the BS transmit power while achieving the users' various SM constraints. In the modified coordinate system, we obtain the real and imaginary parts of the interference signal as 
\begin{equation}
    \begin{bmatrix}
        \mathcal{R}\{s_k^*c_k\}\\
        \mathcal{I}\{s_k^*c_k\}
        \end{bmatrix}=\bar{\mathbf{S}}_k^T\bar{\mathbf{c}}_k\; ,
\end{equation}
where $\bar{\mathbf{S}}_k=\begin{bmatrix}\mathcal{R}\{s_k\}&-\mathcal{I}\{s_k\}\\\mathcal{I}\{s_k\}&\mathcal{R}\{s_k\}\end{bmatrix}$, and its covariance matrix is given by
\begin{equation}
\mathbb{E}\left\{\bar{\mathbf{S}}_k^T\bar{\mathbf{c}}_k\left(\bar{\mathbf{S}}_k^T\bar{\mathbf{c}}_k\right)^T\right\}=\bar{\mathbf{S}}_k^T\mathbf{G}_k\bar{\mathbf{S}}_k\triangleq \check{\mathbf{G}}_k\; .\label{checkGk}
\end{equation}
We obtain the orientation angle of the confidence ellipse using Eq.~(\ref{AlphaEllipse}):
\begin{equation}
\check{\alpha}_k=\arctan\frac{\check{v}_{k,2}}{\check{v}_{k,1}}\; , \label{eq:alphakcheck}
\end{equation}
where $\check{\mathbf{v}}_k = [ \check{v}_{k,1} \; \check{v}_{k,1} ]^T$ is the eigenvector of $\check{\mathbf{G}}_k$ corresponding to the largest eigenvalue. 

To ensure that the confidence ellipse is inside the decision region, we use two lines to constrain its location to satisfy
\begin{align}
y_k^R\sin\theta-y_k^I\cos\theta&\geq 0 \; ,\\
y_k^R\sin\theta+y_k^I\cos\theta&\geq 0\; , 
\end{align}
and we set constraints for the four points on the ellipse where the slope of the tangent line is either $\tan\theta$ or $-\tan\theta$. These points will include the two closest points to the decision boundaries, as illustrated in Fig.~\ref{EllipseLine}. The four lines defined by these points form a bounding box that contains the ellipse. As a result, two SM constraints will be necessary to describe the location of the confidence ellipse with respect to the decision boundaries, which we refer to as the {\em upper} and {\em lower} SM constraints. In Appendix~\ref{appendix:upperSM} we show that the upper SM constraint is defined as follows:
\begin{align}
(\bar{\mathbf{H}}_{k}^1\sin\theta-\bar{\mathbf{H}}_{k}^2\cos\theta)\bar{\mathbf{x}}&\geq\delta_{u,k}+\delta_{u,k}^0\cos\theta\; ,\label{upperSM1}\\
(\bar{\mathbf{H}}_{k}^1\sin\theta-\bar{\mathbf{H}}_{k}^2\cos\theta)\bar{\mathbf{x}}&\geq -\delta_{u,k}+\delta_{u,k}^0\cos\theta \; ,\label{uppserSM2}
\end{align}
where 
\begin{align}
    |\delta_{u,k}|&=|u_k\sin\theta+v_k\cos\theta| \label{eq:delta_u}\\
    &=\sqrt{\omega_k}\sqrt{\lambda_{1,k}\sin^2(\check{\alpha}_k-\theta)
+\lambda_{2,k}\cos^2(\check{\alpha}_k-\theta)}\;\label{eq:deltaUpperRobust} 
\end{align}
represents the distance from the noise-free received signal to the tangent line of the bounding box that intersects the orange points in Fig.~\ref{EllipseLine}. This is a critical dimension, since it represents how much the interference can perturb the received signal and still remain in the confidence ellipse defined by $p$. In general, the larger this distance, the more power that will be required to satisfy the given safety margin.

Using a similar derivation, for the green points on the ellipse in Fig.~\ref{EllipseLine} where the slope of the tangent line is $-\tan\theta$, we obtain the {\em lower} SM constraints
\begin{align}
(\bar{\mathbf{H}}_{k}^1\sin\theta+\bar{\mathbf{H}}_{k}^2\cos\theta)\bar{\mathbf{x}}&\geq\delta_{l,k}+\delta_{l,k}^0\cos\theta\; ,\label{lowerSM1}\\
(\bar{\mathbf{H}}_{k}^1\sin\theta+\bar{\mathbf{H}}_{k}^2\cos\theta)\bar{\mathbf{x}}&\geq -\delta_{l,k}+\delta_{l,k}^0\cos\theta\; ,\label{lowerSM2}
\end{align}
where $\delta_{l,k}^0$ denotes the preset lower SM, 
\begin{align}  
|\delta_{l,k}|&=|u_k^{\prime}\sin\theta-v_k^{\prime}\cos\theta|\label{eq:delta_l}\\
&=\sqrt{\omega_k}\sqrt{\lambda_{1,k}\sin^2(\check{\alpha}_k+\theta)
+\lambda_{2,k}\cos^2(\check{\alpha}_k+\theta)}\; ,\label{eq:deltaLowerRobust}
\end{align}
and $u_k^{\prime}$, $v_k^{\prime}$ can be found using $\kappa_k^{\prime}=\tan(\check{\alpha}_k+\theta)$ and steps similar to those in Appendix \ref{appendix:upperSM}. The value of $\delta_{l,k}$ represents the distance from the noise-free received signal to the tangent lines intersecting the green points, and like $\delta_{u,k}$, a larger $\delta_{l,k}$ generally means a higher required transmit power.

If we denote
\begin{equation}
\mathbf{A}_k^-=\bar{\mathbf{H}}_{k}^1\sin\theta-\bar{\mathbf{H}}_{k}^2\cos\theta\; ,\label{eq:Aneg}
\end{equation}
Eq.~(\ref{upperSM1}) and (\ref{uppserSM2}) can be combined as
\begin{equation}
\mathbf{A}_k^-\bar{\mathbf{x}}-\delta_{u,k}^0\cos\theta\geq|\delta_{u,k}|\; .
\end{equation}
Similarly, Eq.~(\ref{lowerSM1}) and (\ref{lowerSM2}) can be combined to yield 
\begin{equation}
\mathbf{A}_k^+\bar{\mathbf{x}}-\delta_{l,k}^0\cos\theta\geq|\delta_{l,k}|\; ,
\end{equation}
where 
\begin{equation} \mathbf{A}_k^+=\bar{\mathbf{H}}_{k}^1\sin\theta+\bar{\mathbf{H}}_{k}^2\cos\theta\; .\label{eq:Apos}
\end{equation}

\begin{algorithm}[!t]
\begin{small}
    \caption{Non-Circular Symbol Level Precoding}
\label{alg1}
\begin{algorithmic}[1]
    \State {\textbf{Input:}
    $\mathbf{h}_k$, $h_{j,k}$, $s_k$, $\delta_{u,k}^0$, $\delta_{l,k}^0$, $\sigma_k$, $\omega_k$, $\forall k$, $\rho$, $\mathbf{Q}$, and $\theta$.}
    \State {\textbf{Output:} $\bar{\mathbf{x}}^{\text{nc}}$.}  
         \For {$k \in \{1,\cdots,K\}$}
            \State Calculate $\mathbf{A}_k^-$ in (\ref{eq:Aneg}) , and $\mathbf{A}_k^+$ in (\ref{eq:Apos}).
            \State Calculate $\mathbf{G}_k$ using (\ref{Gk}), and its eigenvalues $\lambda_{1,k},\ \lambda_{2,k}$.
             \State Calculate $\check{\mathbf{G}}_k$ using (\ref{checkGk}, and $\check{\alpha}_k$ using (\ref{eq:alphakcheck}).
            \State Calculate $\delta_{u,k}$ using (\ref{eq:deltaUpperRobust}) and $\delta_{l,k}$ using (\ref{eq:deltaLowerRobust}).           
         \EndFor
    \State Solve the optimization problem (\ref{opt:smartSLP}) to obtain the optimal precoded vector $\bar{\mathbf{x}}^{\text{nc}}$.
\end{algorithmic}
\end{small}
\end{algorithm}

One possible SLP optimization problem that results from the derivations above is to design the signal $\bar{\mathbf{x}}$ that minimizes the transmit power at the BS and satisfies preset upper/lower safety margins at each user:
\begin{align}
\min_{\bar{\mathbf{x}}}\quad&\bar{\mathbf{x}}^H\bar{\mathbf{x}}\label{opt:smartSLP}\\
\text{subject to}\quad&
\mathbf{A}_k^-\bar{\mathbf{x}}-\delta_{u,k}^0\cos\theta\geq|\delta_{u,k}|\; ,\ \forall k\; ,\\
&\mathbf{A}_k^+\bar{\mathbf{x}}-\delta_{l,k}^0\cos\theta\geq|\delta_{l,k}|\; ,\ \forall k\; .
\end{align}
This is a quadratic programming problem with linear inequality constraints that can be efficiently solved using standard numerical methods. An alternative formulation of the problem is possible in which the smaller of the upper and lower SMs is maximized for a given total transmit power. In either case, we refer to this approach as ``non-circular SLP,'' or simply NC-SLP. The steps required to implement the algorithm are listed in Algorithm.~ \ref{alg1}.

In contrast, when the BS ignores the non-circularity of the interference and implements the conventional SLP approach assuming circular interference, we will refer to this approach as ``naive SLP.'' For a fair comparison, in implementing the naive SLP approach we will assume that the safety margin is set assuming the same total power as in the case of IGI, i.e., $\sigma_k^2 = \Tr\{\mathbf{G}_k\}$. 

\section{Robust Non-Circular Precoding}\label{sec:robust}
The methods discussed above assume that the covariance matrix $\mathbf{G}_k$ of the effective noise at user $k$ is known at both the users and the BS. In this section, we drop the assumption of known $\mathbf{G}_k$, although we assume that the power level of the AWGN $\bar{\sigma}_k^2$ and the jammer $\rho^2 |h_{j,k}|^2$ are still known. Removing the assumption of known $\mathbf{G}_k$ eliminates the need for instantaneous jammer CSI, and enables modeling of an intelligent jammer that changes $\mathbf{Q}$ from time to time to confuse the legitimate receivers. Thus, it is important that robust algorithms be designed to handle cases with unknown $\mathbf{G}_k$. Our goal in this section is to design precoders that are robust to knowledge of the jammer channel and statistics. In Section~\ref{sec:robustBLP} we address robustness for the case of BLP, and in Section~\ref{sec:robustSLP} we do the same for SLP.

\subsection{Robust BLP}\label{sec:robustBLP}
To make the MMSE BLP approach described in Section~\ref{sec:smartBLP} robust to knowledge of the interference covariances $\mathbf{G}_k$ at the users, we minimize the worst-case MSE over all possible choices of $\mathbf{G}_k$. The optimization problem can be formulated as
\begin{align}
    \min_{\mathbf{P},\beta} \;  \Big\{\max_{\mathbf{G}_k\in \mathcal{S}_k} \; & \mathbb{E}\{\|\beta^{-1}\mathbf{y}_E-\bar{\mathbf{s}}\|^2\}\Big\}\label{opt:robustBLP}
    \\
     \text{subject to}\quad&\mathbb{E}\{\|\mathbf{P}\bar{\mathbf{s}}\|^2\}\leq P_t\; ,\nonumber
\end{align} 
where $\mathcal{S}_k = \left\{ \mathbf{G} \, : \, \mathbf{G}\succeq 0 \, , \, \Tr\{\mathbf{G}\}=\sigma_k^2\right\}$.
As a first step, in the lemma below we prove that for MMSE BLP, the worst-case $\mathbf{G}_k$ for $k=1,\cdots,K$ actually corresponds to circular jamming.
\begin{lemma}\label{lemma:robustBLP}
The maximum MSE 
\begin{equation}\label{eq:lemma1MMSE}
    \max_{\mathbf{G}_k\in \mathcal{S}_k}\; \mathbb{E}\{\|\beta^{-1}\mathbf{y}_E-\bar{\mathbf{s}}\|^2\}
\end{equation}
is achieved with $\mathbf{G}_k=\frac{\sigma^2_k}{2} \mathbf{I}_2$.
\end{lemma}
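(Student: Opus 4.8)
The plan is to first collapse the inner objective onto the minimum MSE attained by the PW-BLP solution of Section~\ref{sec:smartBLP} (the natural object, since the lemma concerns ``MMSE BLP''), and then to show that this quantity, viewed as a function of the $\mathbf{G}_k$'s, is maximized by the circular choice. Substituting the optimal precoder and scaling (\ref{BLPprecoding})--(\ref{BLPbeta}) back into the objective and using $\boldsymbol{\Delta}\,\mathbf{H}_E^T\mathbf{H}_E=\mathbf{I}-a\boldsymbol{\Delta}$, a short calculation telescopes the four Lagrangian terms into the compact form
\begin{equation*}
\mathbb{E}\{\|\beta^{-1}\mathbf{y}_E-\bar{\mathbf{s}}\|^2\}=\frac{a}{2}\,\Tr\!\left\{\left(\mathbf{H}_E\mathbf{H}_E^T+a\mathbf{I}_{2K}\right)^{-1}\right\},\qquad a=\tfrac{2K}{P_t}.
\end{equation*}
Since $\mathbf{X}\mapsto\Tr\{(\mathbf{X}+a\mathbf{I})^{-1}\}$ is convex and monotonically decreasing in the Loewner order, maximizing the MSE is the same as making the whitened Gram $\mathbf{H}_E^T\mathbf{H}_E=\sum_k\bar{\mathbf{H}}_k^T\mathbf{G}_k^{-1}\bar{\mathbf{H}}_k$ as ``small'' as possible under $\Tr\{\mathbf{G}_k\}=\sigma_k^2$.

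The second ingredient is the special structure of the real channel representation, namely $\bar{\mathbf{H}}_k\bar{\mathbf{H}}_k^T=\|\mathbf{h}_k\|^2\mathbf{I}_2$, so that user $k$ enters only through $\bar{\mathbf{H}}_k^T\mathbf{G}_k^{-1}\bar{\mathbf{H}}_k$, which is linear in $\mathbf{G}_k^{-1}$. To isolate one user I would apply the matrix inversion lemma, fixing $\mathbf{G}_l$ for $l\neq k$ and setting $\mathbf{C}=\sum_{l\neq k}\bar{\mathbf{H}}_l^T\mathbf{G}_l^{-1}\bar{\mathbf{H}}_l+a\mathbf{I}$; this rewrites the $\mathbf{G}_k$-dependence of the MSE as $-\Tr\{\mathbf{B}_k(\mathbf{G}_k+\mathbf{D}_k)^{-1}\}$ with $\mathbf{D}_k=\bar{\mathbf{H}}_k\mathbf{C}^{-1}\bar{\mathbf{H}}_k^T$ and $\mathbf{B}_k=\bar{\mathbf{H}}_k\mathbf{C}^{-2}\bar{\mathbf{H}}_k^T$. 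Hence maximizing the MSE over $\mathbf{G}_k$ is equivalent to minimizing the convex function $\Tr\{\mathbf{B}_k(\mathbf{G}_k+\mathbf{D}_k)^{-1}\}$ over the convex set $\mathcal{S}_k$.

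For the per-user problem I would invoke convexity together with the symmetry forced by the trace constraint. When the reduced matrices are isotropic, $\mathbf{B}_k\propto\mathbf{I}_2$ and $\mathbf{D}_k\propto\mathbf{I}_2$ --- which is exactly what $\bar{\mathbf{H}}_k\bar{\mathbf{H}}_k^T\propto\mathbf{I}_2$ produces once the other users are circular --- the objective becomes $\propto\Tr\{(\mathbf{G}_k+d\mathbf{I})^{-1}\}=\sum_i(\lambda_i+d)^{-1}$, a convex symmetric function of the eigenvalues $\lambda_i$ of $\mathbf{G}_k$ that, subject to $\lambda_1+\lambda_2=\sigma_k^2$, is minimized at $\lambda_1=\lambda_2=\sigma_k^2/2$. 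Equivalently, the KKT stationarity condition $(\mathbf{G}_k+\mathbf{D}_k)^{-1}\mathbf{B}_k(\mathbf{G}_k+\mathbf{D}_k)^{-1}\propto\mathbf{I}$ is satisfied by $\mathbf{G}_k=\tfrac{\sigma_k^2}{2}\mathbf{I}_2$. This identifies the circular covariance as the worst case and shows it coincides with the conventional white-noise MMSE design.

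I expect the inter-user coupling to be the main obstacle. The reduction above only shows that $\mathbf{G}_k=\tfrac{\sigma_k^2}{2}\mathbf{I}_2$ is optimal for each user \emph{when the others are already circular}, i.e. that the all-circular point is a coordinate-wise maximizer and a stationary point of the joint problem; promoting this to a genuine global maximum over $\prod_k\mathcal{S}_k$ requires more. The cleanest route is probably to show that the all-circular configuration is the \emph{unique} interior stationary point (the isotropy of $\mathbf{B}_k,\mathbf{D}_k$ needed for stationarity fails unless every $\mathbf{G}_l$ is circular) and that the MSE strictly decreases as any $\mathbf{G}_l$ approaches the boundary of $\mathcal{S}_l$ (a vanishing eigenvalue), so that no boundary or non-circular interior point can exceed it. Verifying this boundary behavior and uniqueness is where the real work lies; the single-user case, in which $\mathrm{MSE}^\star=\tfrac{a}{2}\sum_i \lambda_i/(\|\mathbf{h}_k\|^2+a\lambda_i)$ and the concavity of $\lambda\mapsto\lambda/(\|\mathbf{h}_k\|^2+a\lambda)$ immediately forces the maximum at $\lambda_1=\lambda_2$, is the clean prototype.
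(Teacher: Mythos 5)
Your derivation is correct as far as it goes, and it takes a genuinely different route from the paper's Appendix~B. Your compact closed form agrees with the paper's expression in (\ref{eq:MSE}): since $\mathbf{H}_E\mathbf{H}_E^T$ and $\mathbf{H}_E^T\mathbf{H}_E$ share the same nonzero eigenvalues $\mu_i$, both reduce to $\frac{a}{2}\sum_i(\mu_i+a)^{-1}$. Where the paper parameterizes the eigenvalues of $\mathbf{G}_k$ as $(\lambda_k,\sigma_k^2-\lambda_k)$, differentiates $\Tr\{\mathbf{D}^{-1}\}$ in $\lambda_k$, and shows the derivative vanishes at the all-circular point because $\boldsymbol{\Gamma}_k=\mathbf{U}_k^T\mathbf{A}^T\mathbf{A}\mathbf{U}_k$ is then a proper $2\times 2$ matrix, you instead isolate user $k$ via Woodbury and solve $\min_{\mathbf{G}_k\in\mathcal{S}_k}\Tr\{\mathbf{B}_k(\mathbf{G}_k+\mathbf{D}_k)^{-1}\}$ exactly by Jensen on the eigenvalues. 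This buys something the paper's argument does not: per-block \emph{global} optimality of the circular choice, rather than mere first-order stationarity. One attribution to tighten: the isotropy $\mathbf{B}_k,\mathbf{D}_k\propto\mathbf{I}_2$ does not follow from $\bar{\mathbf{H}}_k\bar{\mathbf{H}}_k^T=\|\mathbf{h}_k\|^2\mathbf{I}_2$ alone; it holds because, with every $\mathbf{G}_l$ ($l\neq k$) circular, $\mathbf{C}$ is the real representation of a complex Hermitian matrix $\mathbf{C}_c$, whence $\bar{\mathbf{H}}_k\mathbf{C}^{-1}\bar{\mathbf{H}}_k^T=(\mathbf{h}_k\mathbf{C}_c^{-1}\mathbf{h}_k^H)\,\mathbf{I}_2$ --- the same ``properness'' mechanism the paper invokes for $\boldsymbol{\Gamma}_k$.

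The inter-user coupling gap you flag is real, but you overestimate the work needed to close it --- and you should note that the paper's own proof has exactly the same gap, since it only verifies that the all-circular point is stationary and then asserts it is the maximum. Both arguments are completed by a one-line observation that your Woodbury step nearly delivers: with $\mathbf{G}=\text{blockdiag}\{\mathbf{G}_1,\cdots,\mathbf{G}_K\}$ and $\mathbf{F}$ as in the paper's appendix, one has
\begin{equation*}
\mathbb{E}\{\|\beta^{-1}\mathbf{y}_E-\bar{\mathbf{s}}\|^2\}=K-\frac{1}{2}\Tr\left\{\left(a\mathbf{G}+\mathbf{F}\mathbf{F}^T\right)^{-1}\mathbf{F}\mathbf{F}^T\right\}\; ,
\end{equation*}
which is jointly \emph{concave} in $(\mathbf{G}_1,\cdots,\mathbf{G}_K)$: the map $\mathbf{G}\mapsto(a\mathbf{G}+\mathbf{F}\mathbf{F}^T)^{-1}$ is operator convex (affine precomposition of the matrix inverse, with $\mathbf{F}\mathbf{F}^T\succ 0$ generically), $\Tr\{\cdot\,\mathbf{F}\mathbf{F}^T\}$ is a positive linear functional, and the block-diagonal embedding is linear in the $\mathbf{G}_k$. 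Since $\prod_k\mathcal{S}_k$ is a Cartesian product of convex sets, its tangent cone factors, so the blockwise optimality you establish at the all-circular point implies joint first-order stationarity, and concavity upgrades this to the global maximum. No uniqueness-of-stationary-points or boundary analysis is required; indeed the displayed form extends the MSE continuously to rank-deficient $\mathbf{G}_k$, so the boundary of each $\mathcal{S}_k$ is covered automatically.
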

\begin{proof}
See Appendix \ref{appendix: robustBLP}.
\end{proof}
Interestingly, the robust BLP design is implemented under the assumption that the effective noise is uncorrelated in the real and imaginary part, which is equivalent to assuming $\mathbf{G}_k=\frac{1}{2}(\rho^2|h_{j,k}|^2+\bar{\sigma}_k^2)\mathbf{I}_2$. Substituting this value for $\mathbf{G}_k$ into Eq.~(\ref{BLPprecoding}) and (\ref{BLPbeta}) and implementing the MMSE precoder will be referred to hereafter as ``robust BLP". 

\subsection{Robust SLP}\label{sec:robustSLP}
Here we study the robustness of SLP to knowledge of the IGI covariance $\mathbf{G}_k$. Due to the higher complexity of finding the SLP solution, this is a challenging problem. Fortunately, we will show below that only the extreme cases where the degree of non-circularity is maximum need be considered to find the worst-case scenario. Thus, only rank-deficient $\mathbf{Q}$ need be considered, and we show how the worst-case $\mathbf{Q}$ can be found via a one-dimensional search.

As with the robust MMSE BLP approach, we will design the SLP algorithm to minimize the worst-case transmit power with respect to all possible 
$\mathbf{G}_k\in\mathcal{S}_k$. This can be achieved by reformulating the non-circular SLP problem in~(\ref{opt:smartSLP}) as follows:
\begin{align}
\min_{\bar{\mathbf{x}}}& \;\bar{\mathbf{x}}^T\bar{\mathbf{x}}\label{opt:robustSLP2}\\
\text{subject to}\quad& 
\mathbf{A}_k^-\bar{\mathbf{x}}-\delta_{u,k}^0\cos\theta\geq\max_{\mathbf{G}_k\in \mathcal{S}_k} |\delta_{u,k}|\; ,\ \forall k\; ,\nonumber\\
&\mathbf{A}_k^+\bar{\mathbf{x}}-\delta_{l,k}^0\cos\theta\geq\max_{\mathbf{G}_k\in \mathcal{S}_k}|\delta_{l,k}|\;  ,\ \forall k\; .\nonumber  
\end{align}
In other words, for each $k$, we find the worst-case $\mathbf{G}_k$ that maximizes the size of the confidence ellipse and pushes it closer to the decision boundaries. The closer any of these points is to the decision boundary, the larger the transmit power required to satisfy the desired safety margin. Unlike~(\ref{opt:smartSLP}), this is a challenging problem; as it stands, (\ref{opt:robustSLP2}) requires solving a convex SLP problem for every possible $\mathbf{G}_k$ to find the one that requires the most transmit power. 

Fortunately, the complexity of the problem can be significantly reduced based on the following lemma.
\begin{lemma}\label{lemma:robustSLP}
   The matrix $\mathbf{G}_k = \rho^2\bar{\mathbf{H}}_{j,k}\mathbf{Q}\bar{\mathbf{H}}_{j,k}^T+\frac{1}{2}\bar{\sigma}_k^2\mathbf{I}_2$ corresponding to the worst-case (maximum) transmit power in~(\ref{opt:robustSLP2}) will have a rank-deficient $\mathbf{Q}$. 
\end{lemma}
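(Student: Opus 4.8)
The plan is to reduce the maximization of each safety margin over the jammer covariance to a two-parameter problem in the eigenvalue gap and the orientation of the confidence ellipse, and then show that this gap is largest exactly when $\mathbf{Q}$ is rank-deficient. First I would exploit the fact that $\bar{\mathbf{S}}_k$ and $\bar{\mathbf{H}}_{j,k}/|h_{j,k}|$ are $2\times 2$ rotation (orthogonal) matrices, so $\check{\mathbf{G}}_k=\bar{\mathbf{S}}_k^T\mathbf{G}_k\bar{\mathbf{S}}_k$ is a similarity transform of $\mathbf{G}_k$ and shares its eigenvalues. Writing the eigenvalues of $\mathbf{Q}$ as $q_1\ge q_2\ge 0$ with $q_1+q_2=1$, the eigenvalues of $\mathbf{G}_k$ become $\lambda_{i,k}=\rho^2|h_{j,k}|^2 q_i+\tfrac12\bar{\sigma}_k^2$. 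Consequently the sum $\lambda_{1,k}+\lambda_{2,k}=\sigma_k^2$ is fixed, while the gap $\Delta_k\triangleq\lambda_{1,k}-\lambda_{2,k}=\rho^2|h_{j,k}|^2(q_1-q_2)$ ranges over $[0,\rho^2|h_{j,k}|^2]$ and attains its maximum \emph{iff} $q_2=0$, i.e.\ iff $\mathbf{Q}$ is rank-deficient.

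The second step is to argue that the orientation $\check{\alpha}_k$ is a free parameter, decoupled from $\Delta_k$. Since $\check{\mathbf{G}}_k=\rho^2|h_{j,k}|^2\,\mathbf{U}_k\mathbf{Q}\mathbf{U}_k^T+\tfrac12\bar{\sigma}_k^2\mathbf{I}_2$ for the fixed rotation $\mathbf{U}_k=\bar{\mathbf{S}}_k^T\bar{\mathbf{H}}_{j,k}/|h_{j,k}|$, and the eigenvectors of $\mathbf{Q}$ may point in any direction, the principal eigenvector of $\check{\mathbf{G}}_k$ — and hence $\check{\alpha}_k$ — can be set arbitrarily whenever $\Delta_k>0$. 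Substituting into~(\ref{eq:deltaUpperRobust})--(\ref{eq:deltaLowerRobust}) and applying the double-angle identity then yields the compact forms
\begin{align*}
|\delta_{u,k}|^2&=\omega_k\Big[\tfrac{\sigma_k^2}{2}-\tfrac{\Delta_k}{2}\cos 2(\check{\alpha}_k-\theta)\Big]\,,\\
|\delta_{l,k}|^2&=\omega_k\Big[\tfrac{\sigma_k^2}{2}-\tfrac{\Delta_k}{2}\cos 2(\check{\alpha}_k+\theta)\Big]\,.
\end{align*}
Because the constraints in~(\ref{opt:robustSLP2}) decouple across the two margins, each worst case can be handled separately: for the upper margin I would first choose $\check{\alpha}_k=\theta+\tfrac{\pi}{2}$ so that $\cos 2(\check{\alpha}_k-\theta)=-1$, after which $|\delta_{u,k}|^2=\omega_k(\tfrac{\sigma_k^2}{2}+\tfrac{\Delta_k}{2})$ is strictly increasing in $\Delta_k$ and is therefore maximized at $\Delta_k=\rho^2|h_{j,k}|^2$; the lower margin follows identically with $\check{\alpha}_k=\tfrac{\pi}{2}-\theta$. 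In both cases the maximizing gap forces $q_2=0$, so the binding $\mathbf{Q}$ is rank-deficient.

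Finally I would link the per-margin maxima to the transmit power. Enlarging the right-hand sides of the two inequalities in~(\ref{opt:robustSLP2}) only shrinks the feasible set for $\bar{\mathbf{x}}$, so the optimal value $\bar{\mathbf{x}}^T\bar{\mathbf{x}}$ is monotonically nondecreasing in $\max|\delta_{u,k}|$ and $\max|\delta_{l,k}|$; hence the covariance producing the worst-case (maximum) power is precisely the one attaining these margin maxima, namely a rank-deficient $\mathbf{Q}$. The hard part will be rigorously justifying the decoupling of $\Delta_k$ and $\check{\alpha}_k$ — that every admissible pair $(\Delta_k,\check{\alpha}_k)$ with $\Delta_k\in[0,\rho^2|h_{j,k}|^2]$ is realized by some feasible $\mathbf{Q}$ — and confirming that the upper and lower margins, although maximized at different orientations, are each nonetheless maximized at a rank-one $\mathbf{Q}$, so the single statement of the lemma remains consistent. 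The degenerate circular case $\Delta_k=0$, where $\check{\alpha}_k$ is undefined, never attains the maximum and thus poses no difficulty.
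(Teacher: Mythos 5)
Your proof is correct, and its core mechanism is the same as the paper's: parametrize the eigenvalues of $\mathbf{Q}$ as a convex split $q_1+q_2=1$, observe that the squared margins in (\ref{eq:deltaUpperRobust})--(\ref{eq:deltaLowerRobust}) are affine in that split, and conclude extremality at an endpoint, i.e.\ rank-one $\mathbf{Q}$, with monotonicity of the optimal power in the right-hand sides finishing the argument. Where you diverge is instructive: the paper deliberately holds the orientation $\check{\alpha}_k$ \emph{fixed} and proves only that, for every orientation, the worst eigenvalue split is an endpoint (hence the four bounds $p_k^{u1},p_k^{u2},p_k^{l1},p_k^{l2}$ and the one-dimensional search over $\tilde{\phi}$ in Algorithm~\ref{alg2}); you additionally optimize $\check{\alpha}_k$ in closed form via the double-angle identity, getting explicit worst orientations $\theta+\frac{\pi}{2}$ and $\frac{\pi}{2}-\theta$ and worst margins $|\delta_{u,k}|^2=|\delta_{l,k}|^2=\omega_k\bigl(\rho^2|h_{j,k}|^2+\tfrac{1}{2}\bar{\sigma}_k^2\bigr)$, which under the literal per-constraint maxima in (\ref{opt:robustSLP2}) would render the $N_{div}$ search unnecessary. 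The trade-off is that your decoupling of $(\Delta_k,\check{\alpha}_k)$ is valid \emph{per user and per margin} (and the realizability you flag as "the hard part" is in fact immediate from the structure you already noted: $\bar{\mathbf{S}}_k$ and $\bar{\mathbf{H}}_{j,k}/|h_{j,k}|$ are rotations, so $\check{\alpha}_k=\phi+\measuredangle h_{j,k}-\measuredangle s_k$ sweeps all angles as $\phi$ does, independently of the split), but it fails \emph{across} users and across the two margins, since the jammer has a single $\mathbf{Q}$ and hence a single $\phi$ shared by all $K$ constraints pairs. Your closed-form maxima therefore correspond to a more conservative design than the common-$\mathbf{Q}$ worst case that Algorithm~\ref{alg2} actually targets, whereas the paper's weaker per-orientation endpoint statement is exactly the form needed to justify restricting that search to rank-one $\mathbf{Q}$. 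Both arguments prove the lemma as stated (rank-deficiency of the worst case), and both share the paper's minor notational slip risk around $\sigma_k^2$ versus $\bar{\sigma}_k^2$ in (\ref{lambda1k})--(\ref{lambda2k}), which you resolve correctly in favor of $\bar{\sigma}_k^2$ consistently with (\ref{eq:jnGk}).
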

\begin{proof}
As mentioned above, the transmit power $\bar{\mathbf{x}}^H\bar{\mathbf{x}}$ required to satisfy the constraints in~(\ref{opt:robustSLP2}) will increase when the worst case SM thresholds $|\delta_{u,k}|$ and $|\delta_{l,k}|$ increase. Hence, to find the maximum transmit power, we first find the $\mathbf{G}_k$, or equivalently the $\mathbf{Q}$, that maximizes the SM thresholds.
Based on $\mathbf{G}_k$ in Eq.~(\ref{eq:jnGk}), if $\lambda_1^Q$ and $\lambda_2^Q$ are eigenvalues of $\mathbf{Q}$,
we will obtain
\begin{align}
&\lambda_{1,k}=\rho^2|h_{j,k}|^2\lambda_1^Q+\frac{1}{2}\sigma_k^2\; ,\label{lambda1k}\\
&\lambda_{2,k}=\rho^2|h_{j,k}|^2\lambda_2^Q+\frac{1}{2}\sigma_k^2\; ,\label{lambda2k}
\end{align}
where $\lambda_1^Q+\lambda_2^Q=1$. Then the term under the square root of Eq.~(\ref{eq:deltaUpperRobust}) can be written as
\begin{align}
p_k\triangleq&\lambda_{1,k}\sin^2(\check{\alpha}_k-\theta)
+\lambda_{2,k}\cos^2(\check{\alpha}_k-\theta)\nonumber\\
=&\rho^2|h_{j,k}|^2\left(\lambda_{1}^Q\sin^2(\check{\alpha}_k-\theta)
+\lambda_{2}^Q\cos^2(\check{\alpha}_k-\theta)\right)+\frac{1}{2}\sigma_k^2\nonumber\\
=&\rho^2|h_{j,k}|^2\lambda_{1}^Q\left(\sin^2(\check{\alpha}_k-\theta)
-\cos^2(\check{\alpha}_k-\theta)\right)\nonumber\\
&\quad\quad+\rho^2|h_{j,k}|^2\cos^2(\check{\alpha}_k-\theta)+\frac{1}{2}\sigma_k^2\label{lambda_neg}\;.
\end{align}
When $\check{\alpha}_k$ is fixed, $p_k$ is a linear function of $\lambda_1^Q$. Since $0\leq\lambda_1^Q\leq 1$, we can find the maximum value of $p_k$ as follows:
\begin{itemize}
    \item  When $\sin^2(\check{\alpha}_k-\theta)
\geq\cos^2(\check{\alpha}_k-\theta)$, we have
\begin{equation}
p_k^{u1}\triangleq\max\ p_k=\rho^2|h_{j,k}|^2\sin^2(\check{\alpha}_k-\theta)
+\frac{1}{2}\sigma_k^2\; ,\label{eq:maxpu1}
\end{equation}
with $\lambda_1^Q=1,\ \lambda_2^Q=0$;\\
\item When $\sin^2(\check{\alpha}_k-\theta)
\leq\cos^2(\check{\alpha}_k-\theta)$, we have
\begin{equation}
p_k^{u2}\triangleq\max\ p_k=\rho^2|h_{j,k}|^2\cos^2(\check{\alpha}_k-\theta)
+\frac{1}{2}\sigma_k^2\; ,\label{eq:maxpu2}
\end{equation}
with $\lambda_1^Q=0,\ \lambda_2^Q=1$.
\end{itemize}

Similarly, for the lower SM constraints in Eq.~(\ref{eq:deltaLowerRobust}), if $q_k\triangleq\lambda_{1,k}\sin^2(\check{\alpha}_k+\theta)
+\lambda_{2,k}\cos^2(\check{\alpha}_k+\theta)\nonumber$, we can obtain the maxima
\begin{equation}
p_k^{l1}\triangleq\max\ q_k=\rho^2|h_{j,k}|^2\sin^2(\check{\alpha}_k+\theta)
+\frac{1}{2}\sigma_k^2\label{eq:maxpl1}
\end{equation}
or
\begin{equation}
p_k^{l2}\triangleq\max\ q_k=\rho^2|h_{j,k}|^2\cos^2(\check{\alpha}_k+\theta)
+\frac{1}{2}\sigma_k^2\; .\label{eq:maxpl2}
\end{equation}

Thus, to achieve the maximum value of $|\delta_{u,k}|$ in (\ref{eq:deltaUpperRobust}) or the maximum value of $|\delta_{l,k}|$ in (\ref{eq:deltaLowerRobust}), either $\lambda_1^Q$ or $\lambda_2^Q$ should be zero, which means the matrix $\mathbf{Q}$ will be rank deficient. \qedsymbol
\end{proof}

\begin{algorithm}[!t]
\begin{small}
    \caption{Robust Symbol Level Precoding}
\label{alg2}
\begin{algorithmic}[1]
    \State {\textbf{Input:}
    $\mathbf{h}_k$, $h_{j,k}$, $s_k$, $\delta_{u,k}^0$, $\delta_{l,k}^0$, $\sigma_k$, $\omega_k$,  $\forall k$, $\rho$, $\theta$, and $N_{div}$.}
    \State {\textbf{Output:} $\bar{\mathbf{x}}^{\text{robust}}$.}  
    \For {$n \in \{1,\cdots,N_{div}\}$}
         \State {Calculate $\phi^n$ using (\ref{eq:betaN}).}
         \State {Calculate $\mathbf{v}^n$, the principle eigenvector of $\mathbf{Q}$ by (\ref{eq:vn}).}
         \For {$k \in \{1,\cdots,K\}$}
            \State {Calculate $\mathbf{A}_k^-$ in (\ref{eq:Aneg}) , and $\mathbf{A}_k^+$ in (\ref{eq:Apos}).}
            \State {Calculate $\check{\mathbf{v}}_{k}^n$ using (\ref{eq:v_k_n}).}
             \State {Calculate $\check{\alpha}_k^n$ using (\ref{eq:alpha_k_n})}. 
            \State Calculate $p_k^{u1}(n)$, $p_k^{u2}(n)$, $p_k^{l1}(n)$, $p_k^{l2}(n)$ using (\ref{eq:maxpu1})-(\ref{eq:maxpl2}).           
         \EndFor
    \State Solve the optimization problem (\ref{opt:robustSLPfast}) to obtain the optimal precoded vector $\bar{\mathbf{x}}^{n\star}$, as well as the transmit power $|\bar{\mathbf{x}}^{n\star}|^2$.
    \EndFor
    \State{Return $\bar{\mathbf{x}}^{\text{robust}}=\bar{\mathbf{x}}^{n}$, where $n=\argmax_{n\star} {|\bar{\mathbf{x}}^{n\star}|^2}$.}
\end{algorithmic}
\end{small}
\end{algorithm}

Lemma~\ref{lemma:robustSLP} stands in contrast to Lemma~\ref{lemma:robustBLP}; while a circular interference signal generates the worst-case MMSE for BLP, for SLP it is a maximally non-circular jammer signal with a rank deficient $\mathbf{Q}$ that leads to the worst performance. To solve~\eqref{opt:robustSLP2}, we must evaluate the worst case $|\delta_{u,k}|$ and $|\delta_{l,k}|$ for the orientations $\check{\alpha}$ that result from all possible rank-deficient~$\mathbf{Q}$. 

Note that if the improper covariance $\mathbf{Q}$ is rank deficient, then $\bar{\mathbf{H}}_{j,k}\Q\bar{\H}^T_{j,k}$ is also improper and rank deficient, so finding the worst-case rank-one $\Q$ is equivalent to finding the worst-case rank-one $\tilde{\Q}=\bar{\H}_{j,k}\Q\bar{\H}_k^T$. Thus,
we express the principle eigenvector of $\mathbf{Q}$ with direction $\phi$ as
 \begin{equation}
     \mathbf{v}=\begin{bmatrix}
    \cos{\phi}\\
    \sin{\phi}
\end{bmatrix}\; ,\label{eq:vn}
\end{equation}
from which we can use Eq.~(\ref{Gk}), (\ref{newCovariance}), and (\ref{checkGk}), to get the principle eigenvector of $\check{\mathbf{G}}_k$ as
\begin{equation}
\check{\mathbf{v}}_{k}=\begin{bmatrix}
    \cos{\check{\alpha}_k}\\
    \sin{\check{\alpha}_k}
\end{bmatrix}\propto\frac{\rho\bar{\mathbf{S}}_k^T\bar{\mathbf{H}}_{j,k}}{\rho |h_{j,k}|}\mathbf{v}\; ,\label{eq:v_k_n}  
\end{equation}
and hence
\begin{equation}
\check{\alpha}_k=\arctan\frac{\check{\mathbf{v}}_k(2)}{\check{\mathbf{v}}_k(1)}=\phi+\measuredangle h_{j,k}-\measuredangle s_k = \tilde{\phi} - \measuredangle s_k \label{eq:alpha_k_n} \; ,
\end{equation}
where $\tilde{\phi}=\phi+\measuredangle h_{j,k}$. Thus, in principle we must search over all possible $\tilde{\phi}$ to find the worst case non-circularity for each user. For the results presented in the next section, we simply discretize $\tilde{\phi}$ over $N_{div}$ points in $(0,\pi]$:
\begin{equation}
   \tilde{\phi}^n=\frac{n\pi}{N_{div}},\ n\in \{1,\cdots,N_{div}\}\; . \label{eq:betaN}
\end{equation}
For each direction $\tilde{\phi}^n$, we find $\check{\alpha}_k^n$ for each user and substitute this value into (\ref{eq:maxpu1})-(\ref{eq:maxpl2}) to find the worst case bounds for the SLP optimization:
\begin{align}
\bar{\mathbf{x}}^{n\star} = \arg\min_{\bar{\mathbf{x}}^n}&\; {(\bar{\mathbf{x}}}^n)^T\bar{\mathbf{x}}^n\label{opt:robustSLPfast}\\
\text{subject to}\quad
&\mathbf{A}_k^-\bar{\mathbf{x}}^n-\delta_{u,k}^0\cos\theta\geq \sqrt{\omega_k}\sqrt{p_k^{u1}(n)}\; ,\ \forall k\; ,\nonumber\\
&\mathbf{A}_k^-\bar{\mathbf{x}}^n-\delta_{u,k}^0\cos\theta\geq \sqrt{\omega_k}\sqrt{p_k^{u2}(n)}\; ,\ \forall k\; ,\nonumber\\
&\mathbf{A}_k^+\bar{\mathbf{x}}^n-\delta_{l,k}^0\cos\theta\geq \sqrt{\omega_k}\sqrt{p_k^{l1}(n)}\; ,\ \forall k\; ,\nonumber\\
&\mathbf{A}_k^+\bar{\mathbf{x}}^n-\delta_{l,k}^0\cos\theta\geq \sqrt{\omega_k}\sqrt{p_k^{l2}(n)}\; ,\ \forall k\; . \nonumber
\end{align}
The solution is then the $\bar{\mathbf{x}}^{n\star}$ for $n=1,\cdots,N_{div}$ with largest norm. Algorithm~\ref{alg2} above outlines the detailed steps to implement this ``robust SLP'' approach. We will show in the next section that only a very small value for $N_{div}$ is necessary to find the desired solution.

\section{Numerical Results}
In this section we use Monte Carlo simulations over 1000 independent channel realizations in each trial to assess the performance of the proposed BLP and SLP approaches. Table~\ref{table:summary} summarizes the BLP and SLP approaches that will be investigated. For each channel realization, a block of 200 symbols is transmitted. The channels $\mathbf{h}_k$ and $h_{j,k}$ are composed of i.i.d. Gaussian random variables with zero mean and unit variance. The power of the AWGN $n_k$ is assumed to be $\bar{\sigma}=1$ (if not otherwise specified) for all users, and the probability $p_k=p$ defining the confidence ellipses is identical for all users. We also employ the same safety margin ($\delta$) for all users with respect to a given SNR threshold $\psi$, where the relationship between $\delta$ and $\psi$ can be expressed as \cite{li2018interference}
\begin{equation*}
    \psi=\frac{\delta^2}{(\sin\theta)^2(\rho^2+\sigma^2)}\; .
\end{equation*}

\begin{table}[!t]
\centering
\caption{Summary of Precoding Approaches}
\begin{tabular}{|c|c|c|} 
\hline
Name&$\mathbf{G}_k$&Criterion\\
\hline
PW-MSM&Available&MSM\\
\hline
PW-SLP&Available&Minimum Power\\
\hline
Naive BLP&Ignore&MMSE\\
\hline
Naive SLP&Ignore&Minimum Power\\
\hline
PW-BLP&Available&MMSE\\
\hline
NC-SLP&Available&Minimum Power\\
\hline
Robust BLP&$\frac{\sigma^2_k}{2} \mathbf{I}_2$&MMSE\\
\hline
Robust SLP&Rank-deficient&Minimum Power\\
\hline
\end{tabular}
\vspace{1ex}
\label{table:summary}
\end{table}

In some cases we will use the energy efficiency (EE) to quantify the power-performance trade-off of different designs, which is defined as the ratio of the throughput $\tau$ to the transmit power per channel:
\begin{equation*}
    \text{EE}=\frac{\tau}{T\times \|\bar{\mathbf{x}}_c\|^2}, 
\end{equation*} 
where
$\tau$ is defined in \cite{salem2021error, liu2024robust} as:
\begin{equation*}\label{throughput}
\tau=(1-P_B)\times c\times T\times K , 
\end{equation*}
where $P_B$ is the block error rate (BLER),  $c=\log_2D$ is the number of bits per modulation symbol, $T$ is the block length and $K$ is the number of receivers. 

 





\begin{figure}[!t]
\centering
\includegraphics[width=3in]{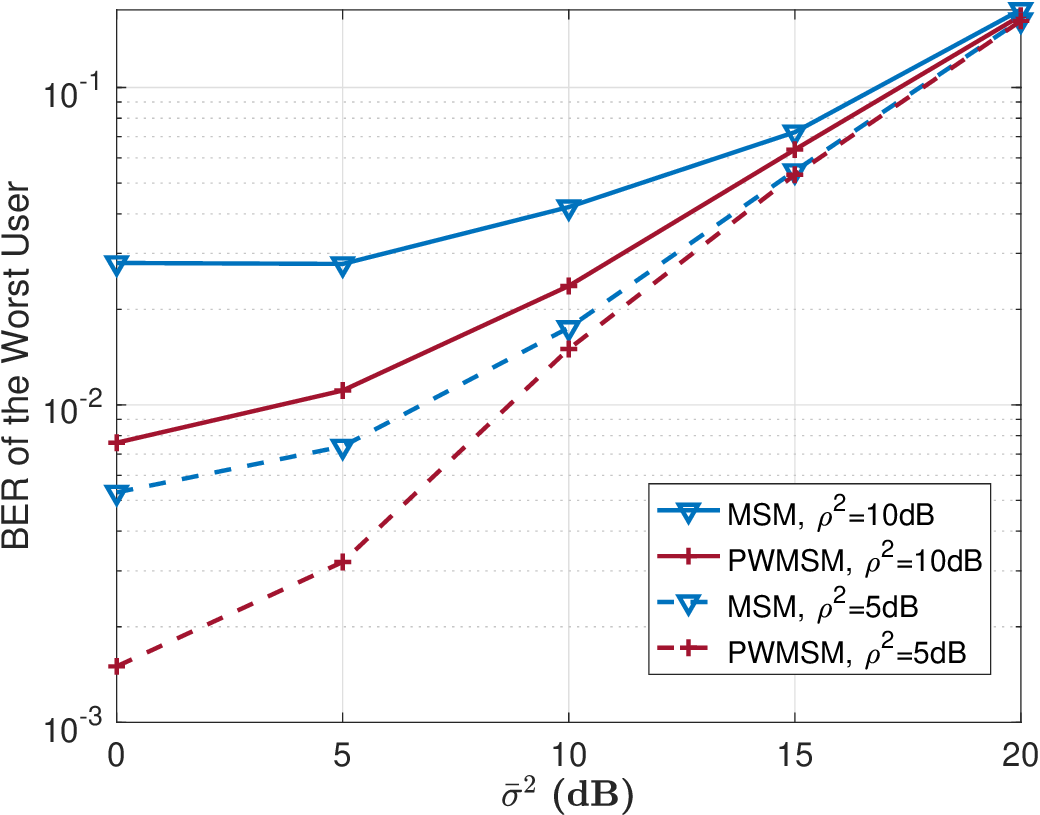}
\caption{BER of the worst user vs. AWGN standard deviation $\bar{\sigma}$, with $M=K=8$, $P_t=30$dB, QPSK.}
\label{fig:MSMwithJammerBER}
\end{figure}
\subsection{Impact of Improper Noise}
First, we explore the impact of improper noise on the MSM algorithm \cite{jedda2017massive,jedda2018quantized}. We compare it with the PW-MSM approach in Section~\ref{sec:MSM}, which considers the IGI and employs a pre-whitening transformation to decorrelate the interference. Fig.~\ref{fig:MSMwithJammerBER} shows that PW-MSM achieves a lower BER, especially at lower SNR where the impact of the interference is greater. Moreover, as power of the jammer increases from 5 to 10 dB, the gap between the two approaches becomes wider. As $\bar{\sigma}$ increases, the impact of the non-circular jammer signal diminishes, and the effective noise becomes more circular. Thus, for high AWGN power, the PW-MSM algorithm converges to the traditional MSM approach, as expected. 

\begin{figure}[!t]
\centering
\includegraphics[width=3in]{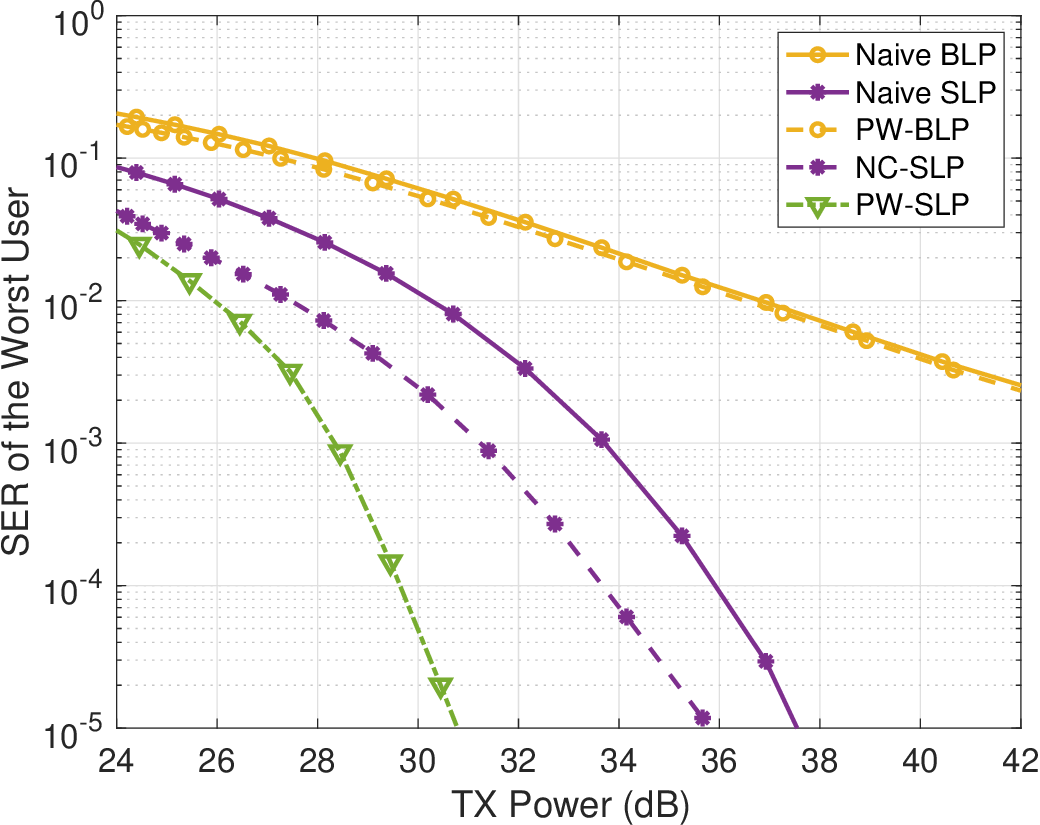}
\caption{SER of the worst user vs. different SNR thresholds $\psi$ with $\rho^2=10$dB, $M=K=8$, $p=80\%$, QPSK.}
\label{fig:PWNCworstSER}
\end{figure}

\begin{figure}[!t]
\centering
\includegraphics[width=3in]{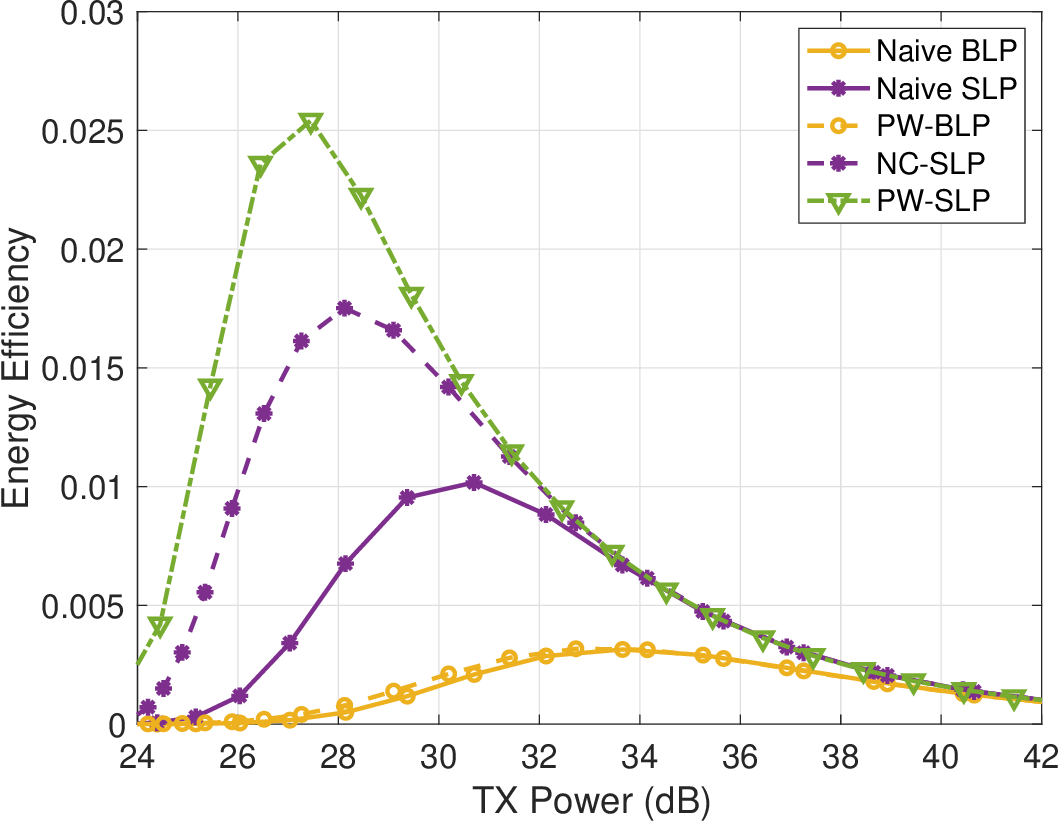}
\caption{Energy efficiency vs. different transmit powers with $\rho^2=10$dB, $M=K=8$, $p=80\%$, QPSK.}
\label{fig:PWNCEE}
\end{figure}

Fig.~\ref{fig:PWNCworstSER} illustrates the SER of the worst user when the BLP/SLP approaches are designed with or without taking the IGI into account. Generally, SLP techniques achieve significantly lower SER for the worst user compared to BLP designs. Even the naive SLP approach that ignores the non-circular noise is superior to PW-BLP that takes the non-circular noise into account. The advantage of SLP is particularly pronounced in the high SNR region, where the proposed SLP designs can provide more than 10dB improvement compared to BLP. The proposed PW-SLP algorithm that requires pre-whitening at the receiver performs the best, and can achieve up to a 6dB improvement compared to naive SLP, demonstrating the significance of redesigning the precoding to take the IGI into account. If pre-whitening at the receiver is not possible, the proposed NC-SLP algorithm can be applied, which also exhibits significant improvement in the presence of IGI. In Fig.~\ref{fig:PWNCEE}, we plot the energy efficiency versus the transmit power for each approach. The SLP designs are clearly more energy efficient compared to the BLP algorithms, and their EE is further improved when the IGI is taken into account.

\subsection{Impact of Probability Constraint $p$}

\begin{figure}[!t]
\centering
\includegraphics[width=3in]{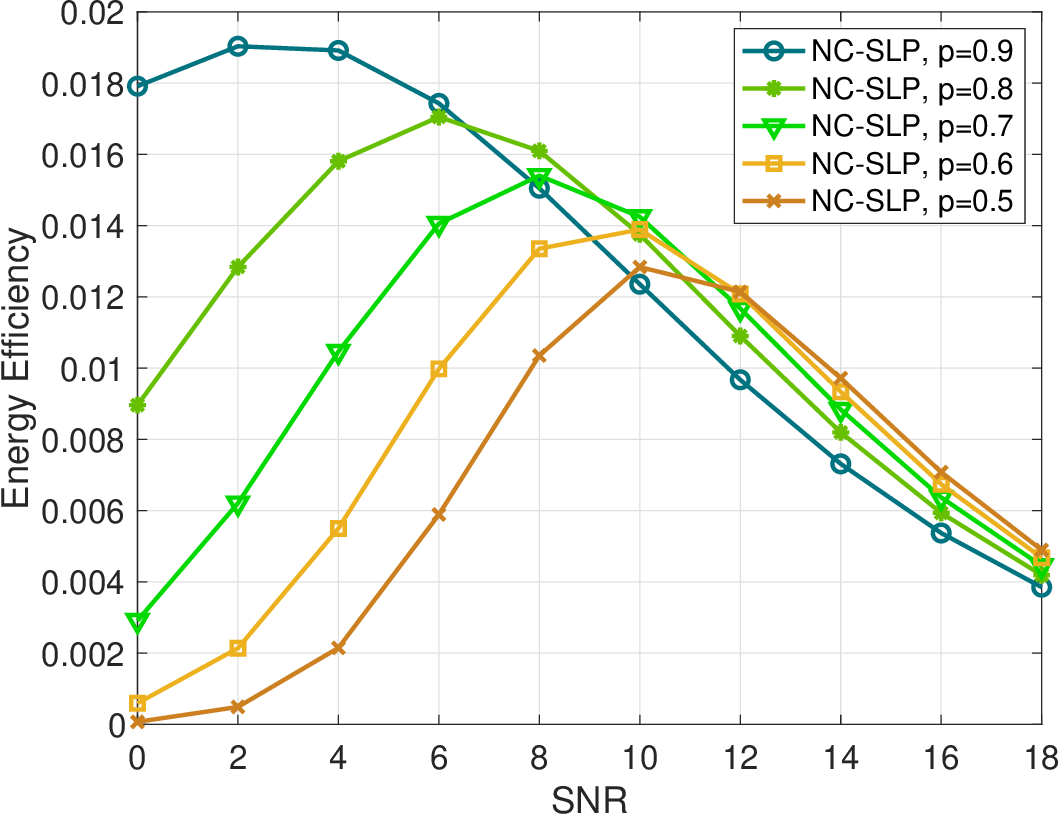}
\caption{The effect of probability constraints on energy efficiency when using NC-SLP, with $\rho^2=10$dB, $M=K=8$, QPSK.}
\label{probSmartEE}
\end{figure}

It is evident from Section~\ref{sec:smartSLP} that the probability constraint plays an important role in SLP design. Fig.~\ref{probSmartEE} shows the EE of the NC-SLP approach under different probability settings. For low SNRs, a larger probability constraint improves the EE of NC-SLP, but at high SNRs, a lower probability is superioer. Interestingly, as $p$ increases from 0.5 to 0.9, the transition point where the EE begins to deteriorate shifts towards lower SNRs. This insight is valuable for power budgeting. For instance, when $p=0.8$, the optimal SNR constraint to achieve the best EE is approximately 6dB, beyond which increasing the transmit power does not provide sufficient growth in throughput.

\subsection{Performance of Robust Designs}
In the first part of this section, we will examine the performance of BLP and SLP as a function of the jammer circularity $\Q$ to validate the results of Section~\ref{sec:robust}. To do so, we will plot the performance (either MSE or transmit power) as a function of $q_{11}$ and $q_{12}$, the defining elements of 
\begin{equation}
    \Q = \left[ \begin{array}{cc} q_{11} & q_{12} \\ q_{12} & 1-q_{11} \end{array} \right] \; ,
\end{equation}
under the positive definiteness constraint $(q_{11}-\frac{1}{2})^2+q_{12}^2 \le \frac{1}{4}$.

\begin{figure}[!t]
\centering
\includegraphics[width=3in]{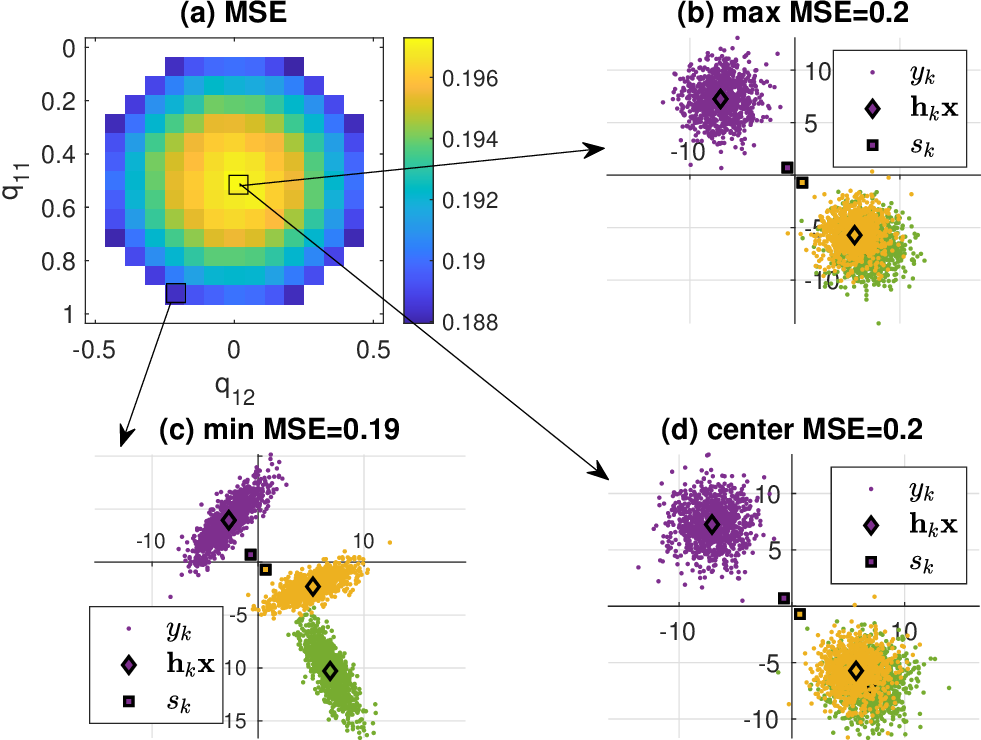}
\caption{PW-BLP based on MMSE, $P_t=20$dB, $\rho^2=10$dB, $M=K=3$, QPSK. (a) MSE for various $q_{11}$ and $q_{12}$; plots of the desired symbol ($s_k$), received noiseless signals ($\mathbf{h}_k\mathbf{x}$), received signals plus improper noise ($y_k$) when (b) MSE is maximum; (c) MSE is  minimum; (d) MSE is calculated with $\mathbf{Q}=\frac{1}{2}\mathbf{I}_2$.}
\label{robustBLPmap}
\end{figure}

\begin{figure}[!t]
\centering
\includegraphics[width=3in]{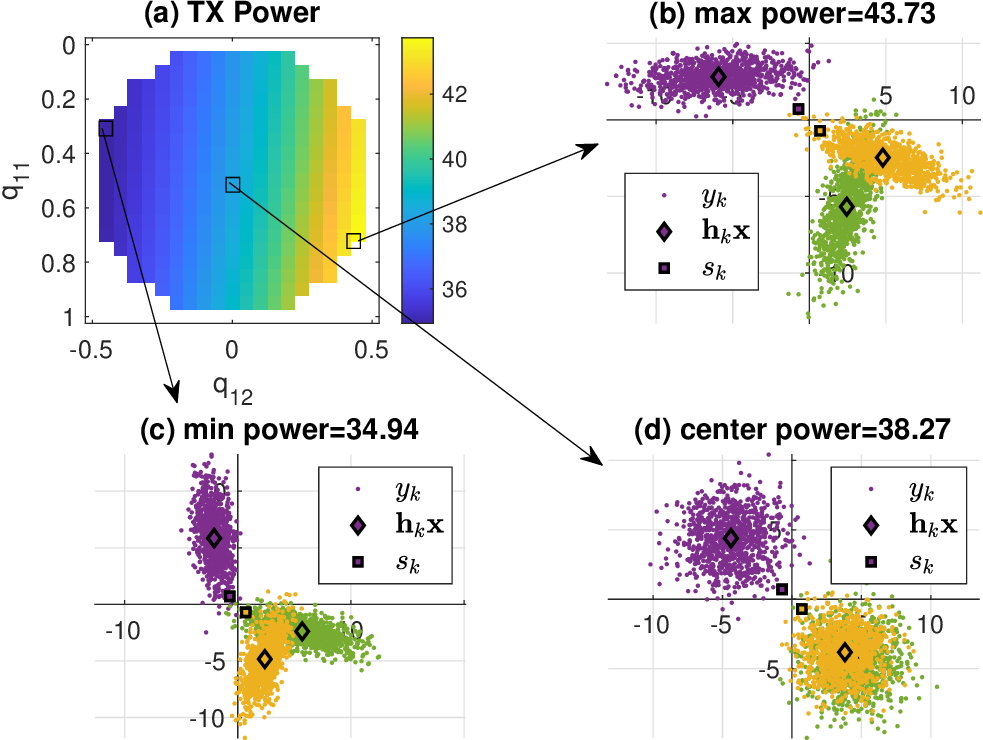}
\caption{NC-SLP based on minimizing power, $p=0.95$, $\rho^2=10$dB, $M=K=3$, QPSK. (a) transmit power for various $q_{11}$ and $q_{12}$; plots of the desired symbol ($s_k$), received noiseless symbols ($\mathbf{h}_k\mathbf{x}$), received symbols plus improper noise ($y_k$) when (b) TX power is the maximum; (c) TX power is minimum; (d) TX power is calculated with $\mathbf{Q}=\frac{1}{2}\mathbf{I}_2$.}
\label{robustSLPmap}
\end{figure}

\subsubsection{Robustness conditions on $\Q$}
Fig.~\ref{robustBLPmap} depicts the MSE as a function of $q_{11}$ and $q_{12}$ evaluated on a grid. Fig.~\ref{robustBLPmap}(a) shows the MSE averaged over $10^6$ symbols using Eq.~(\ref{eq:MSE}) for each corresponding $\mathbf{Q}$. As predicted by Lemma~1, the largest MSE consistently occurs at the center of the grid, where $\mathbf{Q}=\frac{1}{2}\mathbf{I}_2$. This result is consistent with our proof in Section~\ref{sec:robustBLP}.

On the other hand, Fig.~\ref{robustSLPmap}(a) plots the transmit power required to solve the SLP problem in~(\ref{opt:smartSLP}) for various $\mathbf{Q}$. The maximum transmit power appears on the boundary of the feasible region for $q_{11}$ and $q_{12}$, where $\mathbf{Q}$ is rank deficient. This implies that that the jamming disturbance ellipse around the noiseless symbol will degenerate to a straight line, pushing the desired symbol in only one direction. This result validates our proof of Lemma~\ref{lemma:robustSLP} in Section~\ref{sec:robustSLP}.

\subsubsection{Robustness with $\mathbf{Q}$ of different ranks}
\begin{figure}[!t]
\centering
\includegraphics[width=3in]{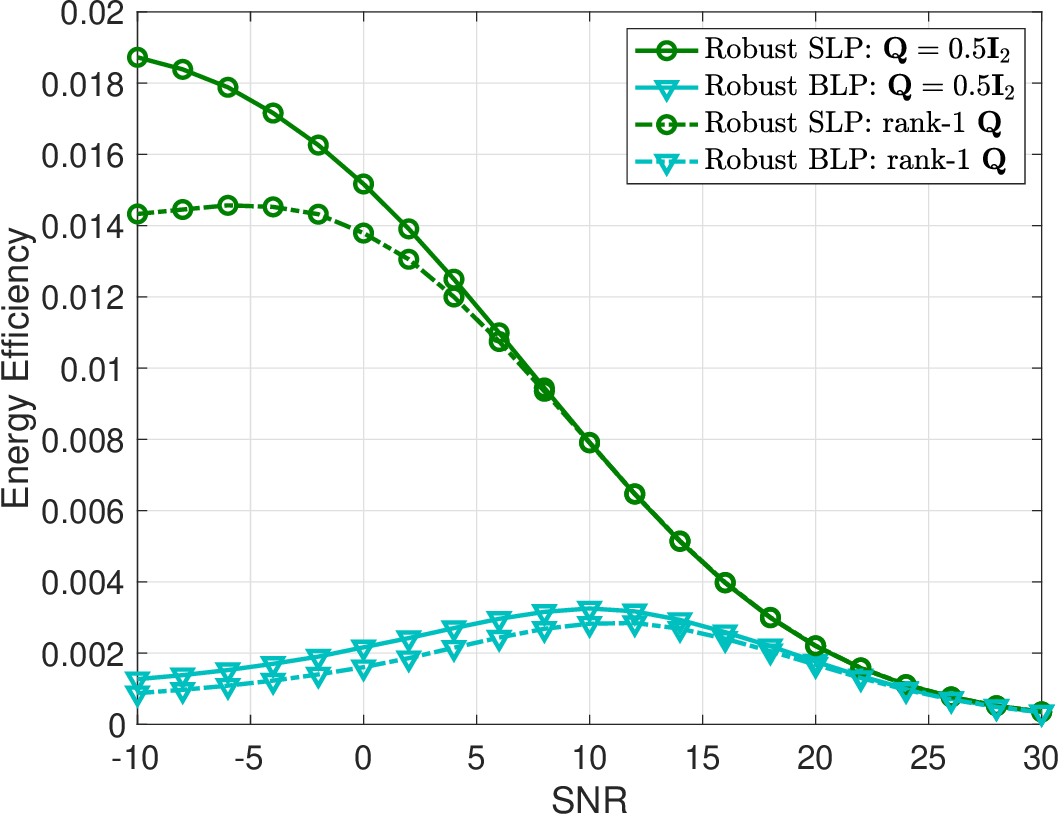}
\caption{EE for $\mathbf{Q}$ generated with different ranks, $\rho^2=10$dB, $M=K=8$, $p=95\%$, QPSK.}
\label{fig:rank12EE}
\end{figure}

From the above discussion, it is clear that the BLP and SLP approaches favor $\mathbf{Q}$ covariances of different ranks. To investigate the performance of the robust algorithms for different $\mathbf{Q}$, we plot EE in Fig.~\ref{fig:rank12EE} for $\mathbf{Q}=0.5\mathbf{I}_2$ and for cases averaged over various rank-1 $\mathbf{Q}$. Notably, robust SLP is more sensitive to the rank of $\mathbf{Q}$ than BLP, which supports the theoretical results of Section~\ref{sec:robust}.    


\section{conclusion}
This work has focused on enhancing block-level and symbol-level precoders in wireless communication systems in situations where the receivers are jammed by improper Gaussian interference. When the channel and statistics of the jammer signal are known, we derived pre-whitening transforms that decorrelated the interference for both BLP and SLP, and we also investigated how SLP could be modified to take IGI into account using only the transmit precoder design, without receiver preprocessing. Then we considered the case where the channel and statistics of the jammer are unknown, and developed robust BLP and SLP approaches for this case. In particular, we designed the MMSE BLP algorithm and the SLP approach such that they optimized their behavior for the worst-case IGI. Interestingly, we showed that the worst case IGI for MMSE BLP is in fact circular, while the worst case for SLP is a jammer with maximally non-circular covariance. Finally, we presented simulation results that demonstrated the advantage of taking IGI into account, and that validated theoretical observations concerning the worst-case jammer covariance scenarios for BLP and SLP.

\appendix
\subsection{Upper SM Constraint}\label{appendix:upperSM}
In this appendix, we show detailed steps to derive the upper SM constraint. Firstly, in order to decorrelate the real and imaginary parts of the received signals, we rotate them by $-\check{\alpha}_k$ so the IGI can be written in real-valued form as
\begin{align}
\check{\mathbf{R}}_k\bar{\mathbf{S}}_k^T\bar{\mathbf{c}}_k\nonumber=&\check{\mathbf{R}}_k\begin{bmatrix}y_k^R-\bar{\mathbf{H}}_{k}^1\bar{\mathbf{x}}\\y_k^I-\bar{\mathbf{H}}_{k}^2\bar{\mathbf{x}}\end{bmatrix}\nonumber\\
=&\begin{bmatrix}(y_k^R-\bar{\mathbf{H}}_{k}^1\bar{\mathbf{x}})\cos\check{\alpha}_k+(y_k^I-\bar{\mathbf{H}}_{k}^2\bar{\mathbf{x}})\sin\check{\alpha}_k\\
-(y_k^R-\bar{\mathbf{H}}_{k}^1\bar{\mathbf{x}})\sin\check{\alpha}_k+(y_k^I-\bar{\mathbf{H}}_{k}^2\bar{\mathbf{x}})\cos\check{\alpha}_k\end{bmatrix}\; , 
\end{align}
where $\check{\mathbf{R}}_k$ is the rotation matrix in Eq.~(\ref{rotationMatrix}) with angle $\check{\alpha}_k$, and $y_k^R$ ($y_k^I$) is the real (imaginary) part of user $k$'s received signal. As in Eq.~(\ref{newCovariance}), after rotation the covariance of the effective noise for user $k$ is
\begin{equation}
\check{\mathbf{G}}_k^{\star}=\check{\mathbf{R}}_k\check{\mathbf{G}}_k\check{\mathbf{R}}_k^H\label{newCovarianceSLP}\; ,
\end{equation}
which is diagonal.

The eigenvalues of $\check{\mathbf{G}}_k^{\star}$ are denoted by $\lambda_{1,k},\ \lambda_{2,k}$, and are equal to those of $\mathbf{G}_k$ due to the orthogonality of $\bar{\mathbf{S}}_k$ and $\check{\mathbf{R}}_k$. The confidence ellipse corresponding to the effective IGI can be expressed as
\begin{align}
&\frac{\left((y_k^R-\bar{\mathbf{H}}_{k}^1\bar{\mathbf{x}})\cos\check{\alpha}_k+(y_k^I-\bar{\mathbf{H}}_{k}^2\bar{\mathbf{x}})\sin\check{\alpha}_k\right)^2}{\lambda_{1,k}}\nonumber\\
&+\frac{\left(-(y_k^R-\bar{\mathbf{H}}_{k}^1\bar{\mathbf{x}})\sin\check{\alpha}_k+(y_k^I-\bar{\mathbf{H}}_{k}^2\bar{\mathbf{x}})\cos\check{\alpha}_k\right)^2}{\lambda_{2,k}}=\omega_k\; ,\label{EllipseSLP}
\end{align}
which can be rewritten as 
\begin{align}
f(y_k^R,y_k^I)&=\lambda_{2,k}(y_k^R\cos\check{\alpha}_k+y_k^I\sin\check{\alpha}_k+a_k)^2\nonumber\\
&\quad \quad+\lambda_{1,k}(-y_k^R\sin\check{\alpha}_k+y_k^I\cos\check{\alpha}_k+b_k)^2\nonumber\\
&=\omega_k\lambda_{1,k}\lambda_{2,k}\; ,
\end{align} 
where $a_k=\bar{\mathbf{H}}_{k}^1\bar{\mathbf{x}}\cos\check{\alpha}_k+\bar{\mathbf{H}}_{k}^2\bar{\mathbf{x}}\sin\check{\alpha}_k$ and $b_k=\bar{\mathbf{H}}_{k}^1\bar{\mathbf{x}}\sin\check{\alpha}_k-\bar{\mathbf{H}}_{k}^2\bar{\mathbf{x}}\cos\check{\alpha}_k$. Considering $y_k^I$ as a function of $y_k^R$, we set the following derivative to zero,
\begin{align}
\frac{d f(y_k^R,y_k^I)}{d y_k^R}&=2A_k\left(\cos\check{\alpha}_k+\sin\check{\alpha}_k\frac{d y_k^I}{d y_k^R}\right)\nonumber\\
&\quad\quad+2B_k\left(-\sin\check{\alpha}_k+\cos\check{\alpha}_k\frac{d y_k^I}{d y_k^R}\right)\nonumber\\
&=0\; ,
\end{align}
where $A_k=\lambda_{2,k}(y_k^R\cos\check{\alpha}_k+y_k^I\sin\check{\alpha}_k+a_k)$ and $B_k=\lambda_{1,k}(-y_k^R\sin\check{\alpha}_k+y_k^I\cos\check{\alpha}_k+b_k)$. It is easy to show that
\begin{equation}
\frac{d y_k^I}{d y_k^R}=-\frac{A_k\cos\check{\alpha}_k-B_k\sin\check{\alpha}_k}{A_k\sin\check{\alpha}_k+B_k\cos\check{\alpha}_k}\; ,
\end{equation}
which is equal to the slope of the tangent line to the ellipse at the point $(y_k^R,y_k^I)$. 

The two orange points on the ellipse, where the tangent line is parallel to the upper decision boundary with the slope $\tan\theta$, can be obtained by simultaneously solving 
\begin{empheq}[left=\empheqlbrace]{align}
&\tan\theta=-\frac{A_k\cos\check{\alpha}_k-B_k\sin\check{\alpha}_k}{A_k\sin\check{\alpha}_k+B_k\cos\check{\alpha}_k}\label{eq:positiveSlope}\; ,\\
&\frac{B_k^2}{{\lambda_{1,k}}}+\frac{A_k^2}{{\lambda_{2,k}}}= \omega_k\label{positiveABellipse}\; .
\end{empheq}
From Eq.~(\ref{eq:positiveSlope}) we can get $A_k=\kappa_k B_k$ where 
\begin{equation}
    \kappa_k=\frac{\sin\check{\alpha}_k-\cos\check{\alpha}_k\tan\theta}{\cos\check{\alpha}_k+\sin\check{\alpha}_k\tan\theta}=\tan(\check{\alpha}_k-\theta)\; .\label{eq:kappak}
\end{equation} 
Substituting this into Eq.~(\ref{positiveABellipse}) yields 
$
B_k=\pm\sqrt{\frac{\omega_k}{d_k}}
$
where 
\begin{equation}
d_k=\frac{1}{\lambda_{1,k}}+\frac{\kappa_k^2}{\lambda_{2,k}}\; .\label{eq:dk}
\end{equation}
Then $y_k^R,y_k^I$ can be found by solving the following equations
\begin{empheq}[left=\empheqlbrace]{align}
&-y_k^R\sin\check{\alpha}_k+y_k^I\cos\check{\alpha}_k=\pm e_k-b_k\; , \\
&f_ky_k^R+g_ky_k^I=a_k\lambda_{2,k}+b_kc_k\lambda_{1,k}\; ,
\end{empheq}
where 
$e_k=\sqrt{\frac{\lambda_{2,k}\omega_k}{\lambda_{1,k}d_k}}$, $f_k=\lambda_{2,k}\cos\check{\alpha}_k+\lambda_{1,k}\kappa_k\sin\check{\alpha}_k$, and $g_k=\lambda_{2,k}\sin\check{\alpha}_k-\lambda_{1,k}\kappa_k\cos\check{\alpha}_k$.
It is easy to show that
\begin{empheq}[left=\empheqlbrace]{align}
&f_k\cos\check{\alpha}_k+g_k\sin\check{\alpha}_k=\lambda_{2,k}\; ,\\
&f_k\sin\check{\alpha}_k-g_k\cos\check{\alpha}_k=\kappa_k\lambda_{1,k}\; .
\end{empheq}

The two orange points in Fig.~\ref{EllipseLine} are then given by
\begin{align}
&y_k^R=\bar{\mathbf{H}}_{k}^1\bar{\mathbf{x}}-u_k\;, \qquad y_k^I=\bar{\mathbf{H}}_{k}^2\bar{\mathbf{x}}+v_k\; ,\label{posPoint1}\\
&y_k^R=\bar{\mathbf{H}}_{k}^1\bar{\mathbf{x}}+u_k\;,  \qquad y_k^I=\bar{\mathbf{H}}_{k}^2\bar{\mathbf{x}}-v_k \; , \label{posPoint2}
\end{align}
where $u_k=\frac{g_ke_k}{\lambda_{2,k}}$ and $v_k=\frac{f_ke_k}{\lambda_{2,k}}$.
Defining the {\em upper SM} as $\delta_{u,k}^0$, the SM constraint corresponding to each of these points is given by
\begin{equation}
y_k^R\sin\theta-y_k^I\cos\theta\geq\delta_{u,k}^0\cos\theta\label{upperDB}\; .
\end{equation}

\subsection{Proof of Lemma \ref{lemma:robustBLP}}\label{appendix: robustBLP}
\begin{proof}
According to (\ref{LagrangianBeta}), the optimal precoding $\mathbf{P}_{opt}$ should satisfy
\begin{equation}
    \Tr\{\mathbf{H}_E\mathbf{P}\}-\beta^{-1}\Tr\{\mathbf{H}_E\mathbf{P}\mathbf{P}^T\mathbf{H}_E^T\}=-4K\beta^{-1}\; .\nonumber
\end{equation}
Then the mean square error (MSE) in (\ref{eq:lemma1MMSE}) can be rewritten as
\begin{align}
    &\mathbb{E}\{\|\beta^{-1}\mathbf{y}_E-\bar{\mathbf{s}}\|^2\}\label{eq:MSE}\\
    =&K-M+\frac{1}{2}\Tr\left\{\left(\mathbf{I}+\frac{1}{a}\sum_k \bar{\mathbf{H}}_k^T\mathbf{G}_k^{-1}\bar{\mathbf{H}}_k\right)^{-1}\right\}\; .\nonumber
\end{align} 
Define 
\begin{eqnarray}
  \Dbf & = & \Ibf_{2M}+\frac{1}{a} \sum_k \bar{\Hbf}_k^T \Gbf_k^{-1} \bar{\Hbf}_k \\
    & = & \Ibf_{2M}+\frac{1}{a} \Fbf^T \Gbf^{-1} \Fbf \; ,
\end{eqnarray}
where 
\begin{eqnarray}
    \Fbf^T & = & \left[ \bar{\Hbf}_1^T \; \cdots \; \bar{\Hbf}_K^T \right]\; , \\
    \Gbf & = & \text{blockdiag} \, \{\Gbf_1, \cdots , \Gbf_K\} \; ,
\end{eqnarray}
so that maximizing the MSE is equivalent to maximizing $\Tr\{\mathbf{D}^{-1}\}$. Note that since we assume the jammer and noise power are fixed, then $\sigma_k^2=\Tr\{\G_k\}$ is fixed.

Since $\Gbf_k=\Gbf_k^T$, we can write $\Gbf_k=\Ubf_k\Lambdabf_k\Ubf_k^T$, where $\Ubf_k\Ubf_k^T=\I_2$ and
\begin{equation}
    \Lambdabf_k = \left[ \begin{array}{cc} \lambda_{k} & 0 \\ 0 & \sigma^2_k - \lambda_{k} \end{array} \right].
\end{equation}
Next, take the gradient of $\Tr\{\D^{-1}\}$ with respect to $\lambda_k$:
\begin{eqnarray*}
\frac{d}{d \lambda_k} \Tr\{\D^{-1}\} & = & - \Tr\left\{ \D^{-1} \left( \frac{d}{d \lambda_k} \D\right) \D^{-1} \right\} \\
& = & \frac{1}{a} \Tr \left\{ \A \left( \frac{d}{d\lambda_k} \G_k \right) \A^T \right\} \\
& = & \frac{1}{a} \Tr \left\{ \U_k \left[\begin{array}{cc} 1 & 0 \\ 0 & -1 \end{array}\right]\U_k^T \A^T\A \right\} \; ,
\end{eqnarray*}
where $\A=\D^{-1}\bar{\H}_k^T\G_k^{-1}$. We can write
\begin{equation}
    \U_k \left[\begin{array}{cc} 1 & 0 \\ 0 & -1 \end{array}\right]\U_k^T = \tilde{\U}_k\left[\begin{array}{cc} -1 & 0 \\ 0 & 1 \end{array}\right] \tilde{\U}_k^T \; ,
\end{equation}
where $\U_k$ is unitary, symmetric and improper, while $\tilde{\U}_k$ is unitary, non-symmetric but still proper. Thus, 
\begin{equation}
\frac{d}{d \lambda_k} \Tr\{\D^{-1}\} = \frac{1}{a} \Tr \left\{ \left[\begin{array}{cc} 1 & 0 \\ 0 & -1 \end{array}\right]\U_k^T \A^T\A\U_k \right\} \; .
\end{equation}
The $2\times 2$ matrix $\Gammabf_k \triangleq \U_k^T \A^T\A\U_k$ will be a proper matrix when $\mathbf{G}_k$ is proper, which means it can be written as 
\begin{equation}
    \Gammabf_{k} = \left[ \begin{array}{cc} \mathcal{R}(\gamma_{k}) & -\mathcal{I}(\gamma_{k}) \\ \mathcal{I}(\gamma_{k}) & \mathcal{R}(\gamma_{k}) \end{array} \right] \; ,
\end{equation}
for some complex $\gamma_k$. Thus, it is clear that $\frac{d}{d\lambda} \Tr\{\D^{-1}\} = 0$ when $\mathbf{G}_k=\frac{\sigma^2_k}{2}\I_2$, which is thus a stationary point for $\Tr\{\D^{-1}\}$ where it is maximized. Since the above is true for all $k$, we can conclude that the worst case for the MMSE block precoder occurs when the jammer covariance matrix is circular. \qedsymbol
\end{proof}

\bibliographystyle{IEEEtran}
\bibliography{IEEEabrv,Ref}

\begin{thebibliography}{10}
\providecommand{\url}[1]{#1}
\csname url@samestyle\endcsname
\providecommand{\newblock}{\relax}
\providecommand{\bibinfo}[2]{#2}
\providecommand{\BIBentrySTDinterwordspacing}{\spaceskip=0pt\relax}
\providecommand{\BIBentryALTinterwordstretchfactor}{4}
\providecommand{\BIBentryALTinterwordspacing}{\spaceskip=\fontdimen2\font plus
\BIBentryALTinterwordstretchfactor\fontdimen3\font minus \fontdimen4\font\relax}
\providecommand{\BIBforeignlanguage}[2]{{%
\expandafter\ifx\csname l@#1\endcsname\relax
\typeout{** WARNING: IEEEtran.bst: No hyphenation pattern has been}%
\typeout{** loaded for the language `#1'. Using the pattern for}%
\typeout{** the default language instead.}%
\else
\language=\csname l@#1\endcsname
\fi
#2}}
\providecommand{\BIBdecl}{\relax}
\BIBdecl

\bibitem{zong20196g}
B.~Zong, C.~Fan, X.~Wang, X.~Duan, B.~Wang, and J.~Wang, ``{6G} technologies: Key drivers, core requirements, system architectures, and enabling technologies,'' \emph{IEEE Vehic. Tech. Magazine}, vol.~14, no.~3, pp. 18--27, Sept. 2019.

\bibitem{saad2019vision}
W.~Saad, M.~Bennis, and M.~Chen, ``A vision of {6G} wireless systems: Applications, trends, technologies, and open research problems,'' \emph{IEEE Network}, vol.~34, no.~3, pp. 134--142, Oct. 2019.

\bibitem{lu2014overview}
L.~Lu, G.~Y. Li, A.~Swindlehurst, A.~Ashikhmin, and R.~Zhang, ``An overview of massive {MIMO}: Benefits and challenges,'' \emph{IEEE {J.} Sel.\ Topics Signal Process.}, vol.~8, no.~5, pp. 742--758, Oct. 2014.

\bibitem{li2020tutorial}
A.~Li, D.~Spano, J.~Krivochiza, S.~Domouchtsidis, C.~G. Tsinos, C.~Masouros, S.~Chatzinotas, Y.~Li, B.~Vucetic, and B.~Ottersten, ``A tutorial on interference exploitation via symbol-level precoding: {Overview}, state-of-the-art and future directions,'' \emph{{IEEE} Commun. Surveys Tuts.}, vol.~22, no.~2, pp. 796--839, Mar. 2020.

\bibitem{lo1999maximum}
T.~K.~Y. Lo, ``Maximum ratio transmission,'' \emph{{IEEE} Trans. Commun.}, vol.~47, no.~10, pp. 1458--1461, Oct. 1999.

\bibitem{spencer2004zero}
Q.~H. Spencer, A.~Swindlehurst, and M.~Haardt, ``Zero-forcing methods for downlink spatial multiplexing in multiuser {MIMO} channels,'' \emph{{IEEE} Trans. Signal Process.}, vol.~52, no.~2, pp. 461--471, Jan. 2004.

\bibitem{bjornson2014optimal}
E.~Bj{\"o}rnson, M.~Bengtsson, and B.~Ottersten, ``Optimal multiuser transmit beamforming: {A} difficult problem with a simple solution structure,'' \emph{{IEEE} Signal Process. Mag.}, vol.~31, no.~4, pp. 142--148, June 2014.

\bibitem{tervo2015optimal}
O.~Tervo, L.-N. Tran, and M.~Juntti, ``Optimal energy-efficient transmit beamforming for multi-user {MISO} downlink,'' \emph{{IEEE} Trans. Signal Process.}, vol.~63, no.~20, pp. 5574--5588, July 2015.

\bibitem{schubert2005iterative}
M.~Schubert and H.~Boche, ``Iterative multiuser uplink and downlink beamforming under {SINR} constraints,'' \emph{{IEEE} Trans. Signal Process.}, vol.~53, no.~7, pp. 2324--2334, June 2005.

\bibitem{alodeh2018symbol}
M.~Alodeh, D.~Spano, A.~Kalantari, C.~G. Tsinos, D.~Christopoulos, S.~Chatzinotas, and B.~Ottersten, ``Symbol-level and multicast precoding for multiuser multiantenna downlink: {A} state-of-the-art, classification, and challenges,'' \emph{{IEEE} Commun. Surveys Tuts.}, vol.~20, no.~3, pp. 1733--1757, May 2018.

\bibitem{ali2018DPCIR}
A.~Haqiqatnejad, F.~Kayhan, and B.~Ottersten, ``Symbol-level precoding design based on distance preserving constructive interference regions,'' \emph{{IEEE} Trans. Signal Process.}, vol.~66, no.~22, pp. 5817--5832, Oct. 2018.

\bibitem{salem2021error}
A.~Salem and C.~Masouros, ``Error probability analysis and power allocation for interference exploitation over {R}ayleigh fading channels,'' \emph{{IEEE} Trans. Wireless Commun.}, vol.~20, no.~9, pp. 5754--5768, Apr. 2021.

\bibitem{masouros2009dynamic}
C.~Masouros and E.~Alsusa, ``Dynamic linear precoding for the exploitation of known interference in {MIMO} broadcast systems,'' \emph{{IEEE} Trans. Wireless Commun.}, vol.~8, no.~3, pp. 1396--1404, Mar. 2009.

\bibitem{masouros2010correlation}
C.~Masouros, ``Correlation rotation linear precoding for {MIMO} broadcast communications,'' \emph{{IEEE} Trans. Signal Process.}, vol.~59, no.~1, pp. 252--262, Oct. 2010.

\bibitem{masouros13known}
C.~Masouros, T.~Ratnarajah, M.~Sellathurai, C.~B. Papadias, and A.~K. Shukla, ``Known interference in the cellular downlink: {A} performance limiting factor or a source of green signal power?'' \emph{{IEEE} Commun. Mag.}, vol.~51, no.~10, pp. 162--171, Oct. 2013.

\bibitem{alodeh2016energy}
M.~Alodeh, S.~Chatzinotas, and B.~Ottersten, ``Energy-efficient symbol-level precoding in multiuser {MISO} based on relaxed detection region,'' \emph{{IEEE} Trans. Wireless Commun.}, vol.~15, no.~5, pp. 3755--3767, Feb. 2016.

\bibitem{li2018interference}
A.~Li and C.~Masouros, ``Interference exploitation precoding made practical: {O}ptimal closed-form solutions for {PSK} modulations,'' \emph{{IEEE} Trans. Wireless Commun.}, vol.~17, no.~11, pp. 7661--7676, Sept. 2018.

\bibitem{liu2022end}
R.~Liu, Z.~Bo, M.~Li, and Q.~Liu, ``End-to-end learning for symbol-level precoding and detection with adaptive modulation,'' \emph{IEEE Wireless Commun. Lett.}, vol.~12, no.~1, pp. 50--54, Jan. 2023.

\bibitem{jedda2017massive}
H.~Jedda, A.~Mezghani, J.~A. Nossek, and A.~Swindlehurst, ``Massive {MIMO} downlink 1-bit precoding with linear programming for {PSK} signaling,'' in \emph{Proc.\ {IEEE}\ Int. Workshop on Signal Process. Adv. in Wireless Commun.\ (SPAWC)}, July 2017.

\bibitem{li2020interference}
A.~Li, C.~Masouros, B.~Vucetic, Y.~Li, and A.~Swindlehurst, ``Interference exploitation precoding for multi-level modulations: {C}losed-form solutions,'' \emph{{IEEE} Trans. Commun.}, vol.~69, no.~1, pp. 291--308, Oct. 2021.

\bibitem{jedda2018quantized}
H.~Jedda, A.~Mezghani, A.~Swindlehurst, and J.~A. Nossek, ``Quantized constant envelope precoding with {PSK} and {QAM} signaling,'' \emph{{IEEE} Trans. Wireless Commun.}, vol.~17, no.~12, pp. 8022--8034, Oct. 2018.

\bibitem{liu2024robust}
L.~Liu, C.~Masouros, and A.~Swindlehurst, ``Robust symbol level precoding for overlay cognitive radio networks,'' \emph{{IEEE} Trans. Wireless Commun.}, vol.~23, no.~2, pp. 1403--1415, Feb. 2024.

\bibitem{masouros2015exploiting}
C.~Masouros and G.~Zheng, ``Exploiting known interference as green signal power for downlink beamforming optimization,'' \emph{{IEEE} Trans. Signal Process.}, vol.~63, no.~14, pp. 3628--3640, May 2015.

\bibitem{kay1993fundamentals}
S.~M. Kay, \emph{{Fundamentals of Statistical Signal Processing: Estimation Theory}}.\hskip 1em plus 0.5em minus 0.4em\relax Prentice-Hall, Inc., 1993.

\bibitem{adali2011complex}
T.~Adali, P.~J. Schreier, and L.~L. Scharf, ``Complex-valued signal processing: The proper way to deal with impropriety,'' \emph{{IEEE} Trans. Signal Process.}, vol.~59, no.~11, pp. 5101--5125, Nov. 2011.

\bibitem{schreier2010statistical}
P.~J. Schreier and L.~L. Scharf, \emph{{Statistical Signal Processing of Complex-valued Data: The Theory of Improper and Noncircular Signals}}.\hskip 1em plus 0.5em minus 0.4em\relax Cambridge University Press, 2010.

\bibitem{canbilen2018spatial}
A.~E. Canbilen, M.~M. Alsmadi, E.~Basar, S.~S. Ikki, S.~S. Gultekin, and I.~Develi, ``Spatial modulation in the presence of {I/Q} imbalance: Optimal detector \& performance analysis,'' \emph{{IEEE} Commun. Lett.}, vol.~22, no.~8, pp. 1572--1575, Aug. 2018.

\bibitem{dao2020iq}
T.~Dao and G.~Hueber, ``{I/Q} imbalance calibration method for {5G} ultra-wideband transceivers,'' \emph{{IEEE} Trans. Circuits Syst. {II}}, vol.~67, no.~12, pp. 3048--3052, Dec. 2020.

\bibitem{kim2016asym}
M.~Kim, J.~H. Cho, and J.~S. Lehnert, ``Asymptotically optimal low-complexity {SC-FDE} with noise prediction in data-like improper-complex interference,'' \emph{{IEEE} Trans. Wireless Commun.}, vol.~15, no.~3, pp. 2090--2103, Mar. 2016.

\bibitem{lameiro2016maximally}
C.~Lameiro, I.~Santamar{\'\i}a, W.~Utschick, and P.~J. Schreier, ``Maximally improper interference in underlay cognitive radio networks,'' in \emph{Proc.\ {IEEE} Int. Conf. Acoust. Speech Signal Process. (ICASSP)}, 2016, pp. 3666--3670.

\bibitem{alsmadi2018ssk}
M.~M. Alsmadi, N.~A. Ali, and S.~S. Ikki, ``{SSK} in the presence of improper {Gaussian} noise: Optimal receiver design and error analysis,'' in \emph{Proc.\ {IEEE} Wireless Commun.\ Netw.\ Conf.\ (WCNC)}, 2018.

\bibitem{sallem2012optimal}
S.~Sallem, J.-P. Delmas, and P.~Chevalier, ``Optimal {SIMO MLSE} receivers for the detection of linear modulation corrupted by noncircular interference,'' in \emph{IEEE Statist.\ Signal Process.\ Workshop (SSP)}, 2012, pp. 840--843.

\bibitem{alsmadi2021effect}
M.~M. Alsmadi, A.~E. Canbilen, N.~A. Ali, and S.~S. Ikki, ``Effect of generalized improper {Gaussian} noise and in-phase/quadrature-phase imbalance on quadrature spatial modulation,'' \emph{IEEE Open J.\ of Signal Process.}, vol.~2, pp. 295--308, May 2021.

\bibitem{zeng2013optimized}
Y.~Zeng, R.~Zhang, E.~Gunawan, and Y.~L. Guan, ``{Optimized transmission with improper Gaussian signaling in the $K$-user MISO interference channel},'' \emph{{IEEE} Trans. Wireless Commun.}, vol.~12, no.~12, pp. 6303--6313, Dec. 2013.

\bibitem{de2018reconsidering}
O.~De~C., H.~Jedda, A.~Mezghani, A.~L. Swindlehurst, and J.~A. Nossek, ``{Reconsidering linear transmit signal processing in 1-bit quantized multi-user MISO systems},'' \emph{{IEEE} Trans. Wireless Commun.}, vol.~18, no.~1, pp. 254--267, Jan. 2019.

\bibitem{soley2020improper}
M.~Soleymani, I.~Santamaria, and P.~J. Schreier, ``Improper {Gaussian} signaling for the {$K$}-user {MIMO} interference channels with hardware impairments,'' \emph{IEEE Trans.\ Veh.\ Technol.}, vol.~69, no.~10, pp. 11\,632--11\,645, Oct. 2020.

\bibitem{nasir2019improper}
A.~A. Nasir, H.~D. Tuan, T.~Q. Duong, and H.~V. Poor, ``Improper {Gaussian} signaling for broadcast interference networks,'' \emph{IEEE Signal Process.\ Lett.}, vol.~26, no.~6, pp. 808--812, June 2019.

\bibitem{sterle2007widely}
F.~Sterle, ``Widely linear {MMSE} transceivers for {MIMO} channels,'' \emph{{IEEE} Trans. Signal Process.}, vol.~55, no.~8, pp. 4258--4270, Aug. 2007.

\bibitem{darsena2013widely}
D.~Darsena, G.~Gelli, and F.~Verde, ``Widely-linear precoders and decoders for {MIMO} channels,'' in \emph{Proc. Int. Symp. on Wireless Commun. Syst. (ISWCS)}, 2013.

\bibitem{zhang2017widely}
W.~Zhang, R.~C. de~Lamare, C.~Pan, M.~Chen, J.~Dai, B.~Wu, and X.~Bao, ``Widely linear precoding for large-scale {MIMO} with {IQI}: Algorithms and performance analysis,'' \emph{{IEEE} Trans. Wireless Commun.}, vol.~16, no.~5, pp. 3298--3312, May 2017.

\bibitem{javed2020journey}
S.~Javed, O.~Amin, B.~Shihada, and M.-S. Alouini, ``A journey from improper {Gaussian} signaling to asymmetric signaling,'' \emph{{IEEE} Commun. Surveys Tuts.}, vol.~22, no.~3, pp. 1539--1591, Apr. 2020.

\bibitem{strang2022introduction}
G.~Strang, \emph{{Introduction to Linear Algebra}}.\hskip 1em plus 0.5em minus 0.4em\relax Society for Ind.\ and Appl.\ Math.\ (SIAM), 2022.

\bibitem{meyer2023matrix}
C.~D. Meyer and I.~Stewart, \emph{{Matrix Analysis and Applied Linear Algebra}}.\hskip 1em plus 0.5em minus 0.4em\relax Society for Ind.\ and Appl.\ Math.\ (SIAM), 2023.

\end{thebibliography}
\end{document}